\DeclareMathAlphabet\mathbfcal{OMS}{cmsy}{b}{n}
\newcommand{\ket}[1]{\ensuremath{|#1\rangle}}
\newcommand{\bra}[1]{\ensuremath{\langle #1|}}
\newcommand{\proj}[1]{\ket{#1}\bra{#1}}
\newcommand{\be}{\begin{equation}}
\newcommand{\ee}{\end{equation}}
\newcommand{\ba}{\begin{eqnarray}}
\newcommand{\ea}{\end{eqnarray}}
\newcommand{\norm}[1]{\left\|#1\right\|}
\newcommand{\id}{\mathbb{I}}
\newcommand{\wt}[1]{\widetilde{#1}}
\newcommand{\mE}{\mathcal{E}}
\newcommand{\Sin}{S}
\newcommand{\Sout}{S'}
\newcommand{\Wdiff}{\Delta}
\newcommand{\norsigma}{{\color{black}\eta}}
\newcommand{\assemsigma}{{\color{black}\mathcal{A}}}
\newcommand{\assemtau}{{\color{black}\mathcal{B}}}
\newcommand{\assem}{
{\color{black}{\mathbfcal{A}}}
}
\newcommand{\assemLHS}{
{\color{black}{\mathbfcal{B}}}
}
\newtheorem{result}{Theorem}
\newtheorem{result-coro}[result]{Corollary}
\newtheorem{question}{Question}
\definecolor{nred}{rgb}{0.9,0.1,0.1}
\definecolor{nblack}{rgb}{0,0,0}
\definecolor{nblue}{rgb}{0.2,0.2,0.8}
\definecolor{ngreen}{rgb}{0.2,0.6,0.2}
\definecolor{ublue}{rgb}{0,0,0.5}
\definecolor{pur}{rgb}{0.75,0,0.75}
\definecolor{nngrn}{rgb}{0,0.5,0.5}
\definecolor{CitingColor}{rgb}{0,0.3,1}
\newcommand{\blu}{\color{nblue}}
\newcommand{\CY}[1]{{\color{black}#1}}
\newcommand{\CYtwo}[1]{{\color{black}#1}}
\begin{document}
\title{
General quantum resources provide advantages in work extraction tasks
}

\author{Chung-Yun Hsieh}
\email{chung-yun.hsieh@bristol.ac.uk}
\affiliation{H. H. Wills Physics Laboratory, University of Bristol, Tyndall Avenue, Bristol, BS8 1TL, UK}

\author{Manuel Gessner}
\email{manuel.gessner@uv.es}
\affiliation{Instituto de Física Corpuscular (IFIC), CSIC‐Universitat de València and Departament de Física Teòrica, UV, C/Dr Moliner 50, E-46100 Burjassot (Valencia), Spain} 

\date{\today}

\begin{abstract}
In developing quantum science and technologies, it is essential to demonstrate the so-called {\em quantum advantages}, which are performances that can be achieved only with the assistance of quantum resources.
\CYtwo{Most of the time, different quantum features lead to different advantages. 
Interestingly, there are certain classes of tasks where quantum advantages are achievable by {\em general} quantum resources.
This work reports such a class of tasks in thermodynamics---we provide a work extraction task that certifies general quantum resources of both states and channels, suggesting {\em general} quantum effects can provide non-classical advantages in work extraction.}
We also show that such work extraction tasks can be applied to certify quantum entanglement in a {\em one-sided device-independent} way.
As an application, we report a novel type of anomalous energy flow---a type of locally extractable energy that is attributed to the globally distributed entanglement. 
\CYtwo{Finally, we} show that the existence of this novel anomalous energy flow is equivalent to measurement incompatibility.
\end{abstract}

\maketitle

\section{Introduction}
\CY{{\em Quantum advantages} underpin quantum science and technologies.
They are tasks and benchmarks unachievable by any classical means}~\cite{Kuroiwa2024PRL,Kuroiwa2024PRA,Meier2025PRXEnergy}.
For instance, quantum teleportation~\cite{Bennett93}, super-dense coding~\cite{Bennett92}, and sub-shot-noise interferometric precision with qubit probes~\cite{PhysRevLett.102.100401,RevModPhys.90.035005,PhysRevLett.126.080502} are possible only when we consume quantum entanglement as a resource~\cite{HorodeckiRMP}. 
On the other hand, in the device-independent quantum information processing, quantum nonlocality~\cite{Brunner2014RMP,Acin2007PRL}, quantum steering~\cite{UolaRMP2020,Cavalcanti2016,Branciard2012PRA,Ku2022NC,Hsieh2023-3}, quantum incompatibility~\cite{Otfried2021Rev,Ku2023}, and quantum complementarity~\cite{Hsieh2023} can enhance the performance in cryptography and communication.
Furthermore, dynamical quantum effects can also be resources for, e.g., quantum memories~\cite{Rosset2018PRX,Yuan2021npjQI,Ku2022PRXQ,Vieira2024,Abiuso2024,NarasimhacharPRL2019,Hsieh2025PRA-3}, quantum communication~\cite{Takagi2020PRL,Hsieh2021PRXQ,Hsieh2025PRL,Hsieh2025PRA}, preserving or generating quantumness~\cite{Hsieh2020,Hsieh2021PRXQ,Liu2019DRT,Liu2020PRR,Hsieh2020PRR,Streltsov2015PRL,Hsieh2025PRA-3}, and thermodynamics~\cite{Hsieh2021PRXQ,Hsieh2025PRL,Hsieh2025PRA,Stratton2023}.

Crucially, most of the quantum advantages mentioned above are only possible with {\em specific} resources.
It is thus of great interest, both theoretically and practically, to know whether there is any class of tasks where {\em general} quantum resources can provide any advantage.
\CY{This amounts to asking:
\begin{center}
{\em Can any class of tasks universally certify quantum resources?}
\end{center}}
We call such a class a {\em universal resource certification class} (URCC).
Whenever a URCC is found, \CYtwo{being able} to perform it experimentally is then equivalent to the ability to experimentally certify general quantum resources.
Consequently, discovering a URCC in a novel setting is not only of theoretical interest, but may also provide a novel method to practically certify a broad range of quantum phenomena.

Using a theoretical approach termed {\em quantum resource theories}~\cite{ChitambarRMP2019}, or simply {\em resource theories}, it has been proved that general quantum resources\footnote{With general quantum resources, \CY{we refer to} all resources with reasonable and minimal physical assumptions, including convexity and compactness of the free set; \CY{see also Ref.~\cite{Companion-letter}.}} can provide advantages in the classes of operational tasks such as discrimination~\cite{Takagi2019,Skrzypczyk2019,Hsieh2023-2,Hsieh2022PRR}, exclusion~\cite{Ducuara2020PRL,Hsieh2023,Uola2020PRL}, parameter-estimation in metrology~\cite{Tan2021PRL}, \CY{and input-output games~\cite{Uola2020PRA}. 
These existing URCCs thus not just tell us how to universally certify quantum resources---they also tell us that general quantum resources can provide {\em generic} advantages in these tasks.}

Surprisingly, despite several URCCs having been reported, there is no known URCC in the context of thermodynamics.
So far, quantum signatures in thermodynamics have been identified in, e.g., quantum heat engines~\cite{KosloffPRE2002,FeldmannPRE2006,JiPRL2022,BeyerPRL2019,ChanPRA2022,Biswas2025PRL,Hsieh2024} and conservation laws~\cite{LostaglioNJP2017,Majidy2023,YungerHalpern2016NC,Guryanova2016NC} (see also Refs.~\cite{Lostaglio2020PRL,Levy2020PRXQ,Puliyil2022PRL,Upadhyaya2023,Centrone2024}). 
Also, quantum advantages in thermodynamics of specific resources such as entanglement~\cite{Lipkabartosik2023,Perarnau-LlobetPRX2015,Jennings2010PRE,Rio2011,Skrzypczyk2014NC}, coherence~\cite{Shiraishi2023,Korzekwa2016NJP}, steering~\cite{JiPRL2022,BeyerPRL2019,ChanPRA2022,Biswas2025PRL}, incompatible instruments~\cite{Hsieh2024}, and quantum communication channels~\cite{Hsieh2020PRR,Hsieh2021PRXQ,Hsieh2020,Hsieh2025PRL,Hsieh2025PRA} have been studied. 
\CY{Furthermore, very recently, {\em heat} is found to serve as a witness of certain quantum properties~\cite{deOliveiraJunior2025PRL}.}
Still, a task-based thermodynamic URCC has not been found to date, and it is still unknown whether general quantum resources can provide quantum advantages in a single class of thermodynamic tasks.

In this work, we provide the first thermodynamic URCC based on work extraction tasks, \CYtwo{where general quantum resources can provide advantages.}
\CY{Moreover, in the context of quantum steering~\cite{EPR,Schrodinger1935,Schrodinger1936,Wiseman2007PRL,Jones2007PRA,UolaRMP2020,Cavalcanti2016}, our results not only show that entanglement can be certified by work extraction in a one-sided device-independent way, but also leads to a novel type of anomalous energy flow~\cite{Lipkabartosik2023,Jennings2010PRE}, which is locally extracted and can only be generated by the combination of entanglement and incompatible measurements~\cite{Otfried2021Rev}.
 
This work is the companion paper of Ref.~\cite{Companion-letter}, which \CYtwo{provides a necessary and sufficient characterisation of state conversions in general resource theories via thermodynamic work extraction.}
Here, we provide additional details on the work extraction task, discuss the implications for specific resources, and extend our results to enable the certification of resources beyond states, i.e., of channels and channel ensembles.}

\CY{We structure this work as follows.}
In Sec.~\ref{Sec:preliminary}, we detail the notion of channel ensembles and their resource theories.
In Sec.~\ref{Sec:Discrimination}, we present the first set of main results, which shows that the so-called {\em ensemble state discrimination tasks} introduced in Ref.~\cite{Hsieh2022PRR} form a URCC for channel ensemble resource theories (Theorems~\ref{Thm:UolaTheorem} and~\ref{Result:ImplicationDynamicalResourceTheory}).
In Sec.~\ref{Sec:Work extraction}, we present the second set of main results, which shows that the work extraction tasks introduced in this paper form a URCC for channel ensemble resource theories (Theorem~\ref{AppResult:ThermoOperationalInterpretation}).
\CY{In Sec.~\ref{Sec:1SDI}, by focusing on the quantum steering scenarios~\cite{EPR,Schrodinger1935,Schrodinger1936,Wiseman2007PRL,Jones2007PRA,UolaRMP2020,Cavalcanti2016}, we develop a one-sided device-independent work extraction task that identifies a local anomalous energy flow (Theorem~\ref{Result:Steering}). Observation of this phenomenon can be used to certify entanglement and measurement incompatibility~\cite{Otfried2021Rev,UolaRMP2020}.
Finally, we conclude the paper in Sec.~\ref{Sec:Conclusion}.
}

\section{Preliminary Notions}\label{Sec:preliminary}

\subsection{Resource theories of channel ensembles}
We focus on finite-dimensional quantum systems where a {\em  quantum state}, or simply {\em state}, $\rho$, is a positive semi-definite operator $\rho\ge0$ with unit trace ${\rm tr}(\rho)=1$. 
{\em Quantum measurements} are modelled by {\em positive operator-valued measures} (POVMs)~\cite{QIC-book}, that are sets of positive semi-definite operators $\{E_i\}_i$ satisfying $E_i\ge0$ $\forall\,i$ and $\sum_iE_i=\id$, where $\id$ is the identity operator acting on the given system.
For an initial state $\rho$, a POVM $\{E_i\}_i$ describes the measurement process that outputs the measurement outcome indexed by $i$ with the probability ${\rm tr}(E_i\rho)$.
Finally, the general evolution of quantum systems is described by a {\em quantum channel}, or simply {\em channel}, that is a {\em completely-positive trace-preserving} linear map~\cite{QIC-book}.
Physically, a channel $\mathcal{E}$ mapping from an input system $\Sin$ to an output system $\Sout$ describes the process that transforms any given initial state $\rho$ in $\Sin$ to the final state $\mathcal{E}(\rho)$ in $\Sout$.

We can now define a key notion for this work.
Consider a given input system $\Sin$ and output system $\Sout$.
A {\em channel ensemble} is defined as a finite collection 
\begin{align}
\mathbfcal{E}\coloneqq\{\mathcal{E}_x:{\Sin_x\to\Sout_x}\}_{x=1}^L,
\end{align}
where $\mathcal{E}_x$ is a channel from $\Sin_x$ to $\Sout_x$, and, for each $x$, $\Sin_x$ ($\Sout_x$) is a subsystem of $\Sin$ ($\Sout$).
Throughout this work, we will call ``$\Sin_x\to\Sout_x$'' an {\em input-output route}.
Now, we consider channel ensembles on a {\em fixed} set of input-output routes, which is denoted as 
\begin{align}
{\bm\Lambda}\coloneqq\{\Sin_x\to\Sout_x\}_{x=1}^L, 
\end{align}
and we aim to identify channel ensembles that possess a resource of interest (denoted by ``$R$'').
To this end, following the standard resource-theoretic approach, we first identify the set of all channel ensembles {\em without} this resource---this is called the {\em free set} of the resource $R$ and is written as $\mathfrak{F}_R$.
Channel ensembles in $\mathfrak{F}_R$ are called {\em free} or {\em $R$-resourceless}, and a channel ensemble $\mathbfcal{E}$ is called {\em $R$-resourceful} if $\mathbfcal{E}\notin\mathfrak{F}_R$.

\subsection{Examples}
The notion of channel ensembles encompasses a wide range of resource theories of channels and states. For instance, when we set ${\bm\Lambda} = \{\Sin\to\Sin'\}$, which contains only one input-output route, we recover the so-called {\em channel resource theories}~\cite{Liu2019DRT,Liu2020PRR}, including
\begin{itemize}
\item {\em Quantum memory.} 
By setting $\mathfrak{F}_R$ as the set of all entanglement-breaking channels~\cite{Rosset2018PRX,Yuan2021npjQI,Tabia2024,Ku2022PRXQ}. 
\item {\em Classical communication.} This can be obtained by setting $\mathfrak{F}_R$ as the set of all state-preparation channels~\cite{Takagi2020PRL}. 
\item {\em Resource preservability.} One can set $\mathfrak{F}_R$ as the appropriate resource-annihilating channels~\cite{Hsieh2020,Hsieh2021PRXQ,Stratton2023,Hsieh2025PRA-3}. 
\item {\em State resource theories.} By setting $\mathfrak{F}_R$ as the set of all state-preparation channels outputting some free state of a given state resource, whih is denoted by ``$R_{\rm state}$,'' one can recover the state resource theory of $R_{\rm state}$ by restricting to state-preparation channels.
\end{itemize}
To further illustrate the flexibility of channel ensembles, consider the case of $\Sin=A,\Sout=BC$ and ${\bm\Lambda}=\{A\to B,A\to C\}$. 
In this case, we can obtain the following resource theories:
\begin{itemize}
\item {\em Broadcasting incompatible channels.} By setting $\mathfrak{F}_R$ as the set of all broadcast compatible channels~\cite{Haapasalo2014,Girard2021NC,Viola2015, Haapasalo2021Quantum,Haapasalo2015,Carmeli2019PRL,Carmeli2019JMP} $\mathbfcal{E} = \{\mathcal{E}_1:{A\to B},\mathcal{E}_2:{A\to C}\}$ (i.e., there exists a channel $\mathcal{N}:{A\to BC}$ such that $\mathcal{E}_1 = {\rm tr}_C\circ\mathcal{N}$ and $\mathcal{E}_2 = {\rm tr}_B\circ\mathcal{N}$). 
\item {\em Quantum marginal problems and transitivity problems.} By restricting to only state-preparation channels in this setting, we can encompass various quantum marginal problems as well as transitivity problems (see, e.g., Refs.~\cite{Hsieh2023-2,Tabia2022npjQI,Chen2025,Liu2025} and references therein). 
\end{itemize}
Finally, when we set $\Sin=\Sout=AB$ with $\Sin_1=\Sout_1=A$ and $\Sin_2=\Sout_2=B$, we obtain $\Lambda = \{A\to A,B\to B\}$.
In this setting, we can recover the resource theory of 
\begin{itemize}
\item {\em Quantum channel marginal problems.} One can achieve this by setting $\mathfrak{F}_R$ as the set of compatible (marginal) channels (see Ref.~\cite{Hsieh2022PRR} for details). 
\end{itemize}

Consequently, by studying the resource theory of channel ensembles, we are able to obtain results that can naturally applied to a broad range of different quantum resources.

\section{Certifying quantum resources by discrimination tasks}\label{Sec:Discrimination}

\subsection{Certifying General Quantum Resources}
\CY{We start with a necessary and sufficient certification of} quantum resources possessed by channel ensembles, which further generalises Ref.~\cite{Uola2020PRA}'s main result: 
\begin{result}\label{Thm:UolaTheorem} {\em\cite{Uola2020PRA}}
Consider a given ${\bm\Lambda}=\{\Sin_x\to\Sout_x\}_{x=1}^L$.
Let $\mathfrak{F}_{R}$ be convex and compact. 
Then the following two statements are equivalent:
\begin{enumerate}
\item $\mathbfcal{E}\notin\mathfrak{F}_{R}$. That is, $\mathbfcal{E}$ is $R$-resourceful.
\item There exist state ensembles $\{p_{i|x},\rho_{i|x}\}_{i,x}$ with $\sum_{i}p_{i|x} = 1\;\forall\;x$ \CY{and $p_{i|x}\ge0\;\forall\,i,x$}, POVMs $\{E_{j|x}\}_{j,x}$ with $\sum_{j}E_{j|x} = \id_{\Sout_x}\;\forall\,x$ and $E_{j|x}\ge0\;\forall\,j,x$, and real numbers $\{\omega_{ij|x}\}_{i,j,x}$ such that
\begin{align}\label{Eq:Thm1}
&\sum_{i,j,x}\omega_{ij|x}p_{i|x}{\rm tr}\left[E_{j|x}{\mE_x}(\rho_{i|x})\right]
\notag\\&\quad
>\max_{\mathbfcal{L}\in\mathfrak{F}_{R}}\sum_{i,j,x}\omega_{ij|x}p_{i|x}{\rm tr}\left[E_{j|x}{\mathcal{L}_x}(\rho_{i|x})\right].
\end{align}
\end{enumerate}
\end{result}
\begin{proof}
Ref.~\cite{Uola2020PRA} has proved the case when $\Sin_x=\Sin$ and $\Sout_x=\Sout$ for every $x$.
In the general case, it suffices to use the following map onto a set satisfying the conditions imposed by Ref.~\cite{Uola2020PRA}:
\begin{align}
\mathcal{M}:\mathbfcal{E} &= \left\{{\mE_x:{\Sin_x\to\Sout_x}}\right\}_{x=1}^L\nonumber\\
&\mapsto\left\{\wt{\mE}_x = \frac{\id_{\Sout\setminus\Sout_x}}{d_{\Sout\setminus\Sout_x}}\otimes\left(\mE_{\Sin_x\to\Sout_x}\circ{\rm tr}_{\Sin\setminus\Sin_x}\right):\Sin\to\Sout\right\}_{x=1}^L.
\end{align}
As an embedding of the channels {$\mE_x:{\Sin_x\to\Sout_x}$} into a bigger input-output route $\Sin\to\Sout$, the map $\mathcal{M}$ is one-to-one and continuous \CY{(in the topology induced by the diamond norm~\cite{Watrous-book}).}
Hence, the set $\mathcal{M}\left(\mathfrak{F}_{R}\right)$ is convex and compact if $\mathfrak{F}_{R}$ is so.
It suffices to show the sufficiency ($\Rightarrow$) since the converse direction holds immediately.
When $\mathbfcal{E}\notin\mathfrak{F}_{R}$, it means \mbox{$\mathcal{M}\left(\mathbfcal{E}\right)\notin\mathcal{M}\left(\mathfrak{F}_{R}\right)$} since the map $\mathcal{M}$ is one-to-one.
\CY{By Ref.~\cite{Uola2020PRA},} there exist state ensembles $\{p_{i|x},\rho_{i|x}\}_{i,x}$ in $\Sin$, POVMs $\{E_{j|x}\}_{j,x}$ in $\Sout$, and real numbers $\{\omega_{ij|x}\}_{i,j,x}$ such that
\begin{align}
&\sum_{i,j,x}\omega_{ij|x}p_{i|x}{\rm tr}\left[\wt{E}_{j|x}{\mE_x}\left(\wt{\rho}_{i|x}\right)\right]\nonumber\\
&\quad\quad\quad>\max_{\mathbfcal{L}\in\mathfrak{F}_{R}}\sum_{i,j,x}\omega_{ij|x}p_{i|x}{\rm tr}\left[E_{j|x}\wt{\mathcal{L}}_{x}(\rho_{i|x})\right]\nonumber\\
&\quad\quad\quad=\max_{\mathbfcal{L}\in\mathfrak{F}_{R}}\sum_{i,j,x}\omega_{ij|x}p_{i|x}{\rm tr}\left[\wt{E}_{j|x}{\mathcal{L}_x}\left(\wt{\rho}_{i|x}\right)\right],
\end{align}
where we define, for every $i,j,x$,
\begin{align}
\wt{E}_{j|x}\coloneqq\frac{{\rm tr}_{\Sout\setminus\Sout_x}(E_{j|x})}{d_{\Sout\setminus\Sout_x}}\quad\&\quad\wt{\rho}_{i|x}\coloneqq{\rm tr}_{\Sin\setminus\Sin_x}(\rho_{i|x}),
\end{align}
which are, again, POVMs and states.
\end{proof}

\subsection{Ensemble State Discrimination Tasks as a URCC}
Using Theorem~\ref{Thm:UolaTheorem}, we can then prove a URCC for channel ensemble resources in terms of the {\em ensemble state discrimination task} defined in Ref.~\cite{Hsieh2022PRR}.
Again, consider a given input system $\Sin$, an output system $\Sout$, and a finite set \mbox{${\bm\Lambda}=\{\Sin_x\to\Sout_x\}_{x=1}^L$}.
Then, the task is characterised by 
\begin{align}
D\coloneqq(\{p_x\}_x,\{q_{i|x},\rho_{i|x}\}_{i,x},\{E_{i|x}\}_{i,x}), 
\end{align}
consisting of a probability distribution $\{p_x\}_x$, finitely many state ensembles $\{q_{i|x},\rho_{i|x}\}_{i,x}$ (i.e., $\sum_iq_{i|x}=1\;\forall\,x$ and $q_{i|x}\ge0\;\forall\,i,x$), and POVMs $\{E_{i|x}\}_{i,x}$ (with $\sum_iE_{i|x} = \id_{\Sout_x}\;\forall\,x$ and $E_{i|x}\ge0\;\forall\,i,x$).
Operationally, it means that with probability $p_x$, the input-output route $\Sin_x\to\Sout_x$ is chosen.
Conditioned on this specific $x$, the agent needs to discriminate between the states $\rho_{i|x}$ that are sent with probability $q_{i|x}$ and will be measured by the measurement $\{E_{i|x}\}_i$.
To enhance the discrimination performance, the agent is allowed to use a channel $\mathcal{E}_x:{\Sin_x\to\Sout_x}$ to process the state before the measurement, resulting in the following figure of merit, which is the average success probability (see Ref.~\cite{Hsieh2022PRR} for details):
\begin{align}\label{Eq:P succ}
P\left(D,\mathbfcal{E}\right)\coloneqq\sum_{i,x}p_xq_{i|x}{\rm tr}[E_{i|x}{\mathcal{E}_x}(\rho_{i|x})].
\end{align}
The task $D$ is further called {\em positive} if $p_x~>~0$, $q_{i|x}~>~0$, $E_{i|x}>0$ $\forall\,i,x$.
Our \CY{first main result} generalises Ref.~\cite{Hsieh2022PRR}'s findings, showing that \CY{the above} tasks form a URCC:
\begin{result}\label{Result:ImplicationDynamicalResourceTheory} 
Consider a given ${\bm\Lambda}=\{\Sin_x\to\Sout_x\}_{x=1}^L$.
Let $\mathfrak{F}_{R}$ be convex and compact.
Then the following two statements are equivalent:
\begin{enumerate}
\item $\mathbfcal{E}\notin\mathfrak{F}_{R}$. That is, $\mathbfcal{E}$ is $R$-resourceful.
\item There exists a positive ensemble state discrimination task $D$ such that
\begin{align}
P\left(D,\mathbfcal{E}\right) > \max_{\mathbfcal{L}\in\mathfrak{F}_{R}}P\left(D,\mathbfcal{L}\right).
\end{align}
\end{enumerate}
\end{result}
\begin{proof}
We follow the approach of Theorem~2's proof in Ref.~\cite{Hsieh2022PRR}.
It suffices to show that statement 1 implies statement 2 (as statement 2 directly implies statement 1).
Suppose statement 1 holds, i.e., $\mathbfcal{E}\notin\mathfrak{F}_{R}$. 
By Theorem~\ref{Thm:UolaTheorem} (which is applicable as $\mathfrak{F}_{R}$ is convex and compact), there exist state ensembles $\{p_{i|x},\rho_{i|x}\}_{i,x}$ with $\sum_{i}p_{i|x} = 1\;\forall\;x$ and $p_{i|x}\ge0$ $\forall\,i,x$, POVMs $\{E_{j|x}\}_{j,x}$ with $\sum_{j}E_{j|x} = \id_{\Sout_x}\;\forall\,x$ and \mbox{$E_{j|x}\ge0\;\forall\,j,x$,} and real numbers $\{\omega_{ij|x}\}_{i,j,x}$ such that Eq.~\eqref{Eq:Thm1} holds.
Now, for every $i,x$, define the operators 
\begin{align}
L_{i|x}\coloneqq\sum_j\omega_{ij|x}p_{i|x}E_{j|x},
\end{align} 
which are Hermitian but not semi-definite positive in general.
Then, using Eq.~\eqref{Eq:Thm1}, we have
\begin{align}
\label{Eq:compu ineq 001}
\sum_{i,x}{\rm tr}\left[L_{i|x}{\mE_x}(\rho_{i|x})\right]>\max_{\mathbfcal{L}\in\mathfrak{F}_{R}}\sum_{i,x}{\rm tr}\left[L_{i|x}{\mathcal{L}_x}(\rho_{i|x})\right],
\end{align}
Now, let $\delta>0$ be a strictly positive value, and let $|{\bf i}|$ and $|{\bf x}|$ be the cardinalities of $i$ and $x$, respectively (i.e., $i=1,...,|{\bf i}|$ and $x=1,...,|{\bf x}|$).
Since $\mathcal{E}_x$'s and $\mathcal{L}_x$'s are channels, we have
\begin{align}
|{\bf i}||{\bf x}|\delta = \sum_{i,x}\delta{\rm tr}\left[{\mE_x}(\rho_{i|x})\right] = \sum_{i,x}\delta{\rm tr}\left[{\mathcal{L}_x}(\rho_{i|x})\right],
\end{align}
which is a strictly positive constant. 
Since adding this constant on both sides of Eq.~\eqref{Eq:compu ineq 001} will preserve the strict inequality, we conclude that Eq.~\eqref{Eq:compu ineq 001} holds if and only if
\begin{align}
&\sum_{i,x}{\rm tr}\left[\left(L_{i|x}+\delta\id_{S'_x}\right){\mE_x}(\rho_{i|x})\right]\nonumber\\\label{Eq:compu ineq 002}
&\quad>\max_{\mathbfcal{L}\in\mathfrak{F}_{R}}\sum_{i,x}{\rm tr}\left[\left(L_{i|x}+\delta\id_{S'_x}\right){\mathcal{L}_x}(\rho_{i|x})\right],
\end{align}
and $\delta>0$ can be arbitrarily chosen.
By choosing
\begin{align}\label{Eq: delta range}
\delta>\max_{i,x}\norm{L_{i|x}}_\infty,
\end{align}
we can ensure that 
$
L_{i|x}+\delta\id_{S'_x} >0
$
$
\forall\,i,x.
$
For any $\delta>0$ in the range Eq.~\eqref{Eq: delta range}, we now further define the following operators
\begin{align}
\widetilde{L}_{i|x}^{(\delta)}\coloneqq\frac{L_{i|x}+\delta\id_{S'_x}}{\sum_x\norm{\sum_i\left(L_{i|x}+\delta\id_{S'_x}\right)}_\infty}\quad\forall\,i,x.
\end{align}
Then, $\widetilde{L}_{i|x}^{(\delta)}>0$ $\forall\,i,x$ and $\sum_i\widetilde{L}_{i|x}^{(\delta)}\le\id_{S'_x}$ $\forall\,x$, meaning that $\{\widetilde{L}_{i|x}^{(\delta)}\}_{i=1}^{|{\bf i}|}$ (with a fixed $x$) is {\em part of} a POVM.
Since $\sum_x\norm{\sum_i\left(L_{i|x}+\delta\id_{S'_x}\right)}_\infty>0$ is constant in $i$ and $x$, dividing it on both sides of Eq.~\eqref{Eq:compu ineq 002} gives
\begin{align}\label{Eq:compu ineq 003}
\sum_{i,x}{\rm tr}\left[\widetilde{L}_{i|x}^{(\delta)}{\mE_x}(\rho_{i|x})\right]>\max_{\mathbfcal{L}\in\mathfrak{F}_{R}}\sum_{i,x}{\rm tr}\left[\widetilde{L}_{i|x}^{(\delta)}{\mathcal{L}_x}(\rho_{i|x})\right].
\end{align}
Now, consider an ensemble state discrimination task with $i=1,...,|{\bf i}|+1$ (i.e., with an additional value) and $x=1,...,|{\bf x}|$,
\begin{align}
D^{(\delta,\epsilon)} = \left(\{p_x^{(\delta,\epsilon)}\}_x,\{q_{i|x}^{(\delta,\epsilon)},\rho_{i|x}^{(\delta,\epsilon)}\}_{i,x},\{E_{i|x}^{(\delta,\epsilon)}\}_{i,x}\right),
\end{align}
which is parametrised by $\delta$ and another parameter $0<\epsilon<1$:
\begin{align}
p_x^{(\delta,\epsilon)}& \coloneqq \frac{1}{|{\bf x}|}\quad\forall\,x;\\
q_{i|x}^{(\delta,\epsilon)}&\coloneqq\frac{1-\epsilon}{|{\bf i}|}\quad{\rm if}\;i=1,...,|{\bf i}|\quad\&\quad q_{|{\bf i}|+1|x}^{(\delta,\epsilon)}\coloneqq\epsilon;\\
\rho_{i|x}^{(\delta,\epsilon)}&\coloneqq\rho_{i|x}\quad{\rm if}\;i=1,...,|{\bf i}|\quad\&\quad \rho_{|{\bf i}|+1|x}^{(\delta,\epsilon)}\coloneqq\eta_x;\\
E_{i|x}^{(\delta,\epsilon)}&\coloneqq\widetilde{L}_{i|x}^{(\delta)}\quad{\rm if}\;i=1,...,|{\bf i}|\nonumber\\
\quad\&\quad&E_{|{\bf i}|+1|x}^{(\delta,\epsilon)}\coloneqq\id_{S'_x}-\sum_{i=1}^{|{\bf i}|}\widetilde{L}_{i|x}^{(\delta)}.
\end{align}
Here, $\eta_x$ is an arbitrarily chosen state in $S_x$, and $\{E_{i|x}^{(\delta,\epsilon)}\}_{i=1}^{|{\bf i}|+1}$ is a valid POVM with $|{\bf i}|+1$ elements.
Also, our construction ensures that $D^{(\delta,\epsilon)}$ is positive, as $p_x^{(\delta,\epsilon)}>0$, $q_{i|x}^{(\delta,\epsilon)}>0$, and $E_{i|x}^{(\delta,\epsilon)}>0$ $\forall\,i,x$.
Following Eqs.~(E9) and (E10) in Ref.~\cite{Hsieh2022PRR}, for a set of channels $\mathbfcal{N} = \{\mathcal{N}_x\}_x$ (again with the given ${\bm\Lambda}=\{\Sin_x\to\Sout_x\}_{x=1}^L$), its average success probability Eq.~\eqref{Eq:P succ} reads
\begin{align}
P(D^{(\delta,\epsilon)},\mathbfcal{N}) = \widetilde{P}^{(\delta)}(\mathbfcal{N}) + \epsilon\times\Gamma^{(\delta)}(\mathbfcal{N}),
\end{align}
where
\begin{align}\label{Eq:P tilde}
\widetilde{P}^{(\delta)}(\mathbfcal{N})\coloneqq\frac{1}{|{\bf i}||{\bf x}|}\sum_{x=1}^{|{\bf x}|}\sum_{i=1}^{|{\bf i}|}{\rm tr}\left[\widetilde{L}_{i|x}^{(\delta)}\mathcal{N}_x(\rho_{i|x})\right]
\end{align}
and
\begin{align}
\Gamma^{(\delta)}(\mathbfcal{N})\coloneqq\frac{1}{|{\bf x}|}\sum_{x=1}^{|{\bf x}|}&\bigg\{{\rm tr}\left[\left(\id_{S'_x}-\sum_{i=1}^{|{\bf i}|}\widetilde{L}_{i|x}^{(\delta)}\right)\mathcal{N}_x(\eta_x)\right]\nonumber\\
&\quad-\frac{1}{|{\bf i}|}\sum_{i=1}^{|{\bf i}|}{\rm tr}\left[\widetilde{L}_{i|x}^{(\delta)}\mathcal{N}_x(\rho_{i|x})\right]\bigg\}.
\end{align}
Now, Eqs.~\eqref{Eq:compu ineq 003} and~\eqref{Eq:P tilde} imply
$
\widetilde{P}^{(\delta)}(\mathbfcal{E})>\max_{\mathbfcal{L}\in\mathfrak{F}_{R}}\widetilde{P}^{(\delta)}(\mathbfcal{L}),
$
meaning that $\alpha^{(\delta)}\coloneqq\widetilde{P}^{(\delta)}(\mathbfcal{E})-\max_{\mathbfcal{L}\in\mathfrak{F}_{R}}\widetilde{P}^{(\delta)}(\mathbfcal{L})>0$ is strictly positive.
Combining everything together, we obtain
\begin{align}
&\max_{\mathbfcal{L}\in\mathfrak{F}_{R}}P(D^{(\delta,\epsilon)},\mathbfcal{L})\le\max_{\mathbfcal{L}\in\mathfrak{F}_{R}}\widetilde{P}^{(\delta)}(\mathbfcal{L})+\epsilon\times\max_{\mathbfcal{L}\in\mathfrak{F}_{R}}\Gamma^{(\delta)}(\mathbfcal{L})\nonumber\\
&\quad=\widetilde{P}^{(\delta)}(\mathbfcal{E})-\alpha^{(\delta)}+\epsilon\times\max_{\mathbfcal{L}\in\mathfrak{F}_{R}}\Gamma^{(\delta)}(\mathbfcal{L})\nonumber\\
&\quad=P(D^{(\delta,\epsilon)},\mathbfcal{E})-\alpha^{(\delta)}+\epsilon\times\left[\max_{\mathbfcal{L}\in\mathfrak{F}_{R}}\Gamma^{(\delta)}(\mathbfcal{L})-\Gamma^{(\delta)}(\mathbfcal{E})\right]\nonumber\\
&\quad\eqqcolon P(D^{(\delta,\epsilon)},\mathbfcal{E})-\alpha^{(\delta)}+\epsilon\beta^{(\delta)},
\end{align}
where we define
$
\beta^{(\delta)}\coloneqq\max_{\mathbfcal{L}\in\mathfrak{F}_{R}}\Gamma^{(\delta)}(\mathbfcal{L})-\Gamma^{(\delta)}(\mathbfcal{E}),
$
which is finite since the function $\Gamma^{(\delta)}$ is bounded for all channels.
Since $\alpha^{(\delta)}>0$, if we also have $\beta^{(\delta)}\le0$, then $P(D^{(\delta,\epsilon)},\mathbfcal{E})>\max_{\mathbfcal{L}\in\mathfrak{F}_{R}}P(D^{(\delta,\epsilon)},\mathbfcal{L})$ $\forall\,0<\epsilon<1$.
On the other hand, if $\beta^{(\delta)}>0$, we can choose $\epsilon$ in the interval
\begin{align}\label{Eq: task interval}
0<\epsilon<\min\left\{\frac{\alpha^{(\delta)}}{\beta^{(\delta)}},1\right\}
\end{align}
to achieve $P(D^{(\delta,\epsilon)},\mathbfcal{E})>\max_{\mathbfcal{L}\in\mathfrak{F}_{R}}P(D^{(\delta,\epsilon)},\mathbfcal{L})$.
This means there is a family of positive tasks $D^{(\delta,\epsilon)}$ [i.e., those with parameters $\delta,\epsilon$ satisfy Eq.~\eqref{Eq: task interval}]
that can achieve the strict inequality $P(D^{(\delta,\epsilon)},\mathbfcal{E})>\max_{\mathbfcal{L}\in\mathfrak{F}_{R}}P(D^{(\delta,\epsilon)},\mathbfcal{L})$.
\end{proof}
We remark that Theorem~\ref{Result:ImplicationDynamicalResourceTheory} generalises one of the main results of Ref.~\cite{Hsieh2022PRR} (more precisely, Theorem 2 in Ref.~\cite{Hsieh2022PRR}), which further shows that the state ensemble discrimination tasks introduced in Ref.~\cite{Hsieh2022PRR} not only works for quantum channel marginal problems, but also actually serve as a URCC.
Theoretically, it is then interesting to explore whether the technique adopted here can also provide Theorem 4 in Ref.~\cite{Hsieh2023-2} a novel thermodynamic meaning in work extraction.
As this is beyond the scope of this work, we simply leave it for future research.

\section{Certifying quantum resources by work extraction tasks}\label{Sec:Work extraction}

\CY{\subsection{Energy Storage Enhancement as a Work Extraction Task}}
Crucially, one can utilise Theorem~\ref{Result:ImplicationDynamicalResourceTheory} to further show that general quantum resources can also be certified by work extraction.
To see this, we recap the work extraction task introduced in the companion letter~\cite{Companion-letter}.
Consider a finite-dimensional quantum system with Hamiltonian $H$ and a background temperature $0~<~T~<~\infty$. In thermal equilibrium, it is described by the thermal state 
\begin{align}
\gamma = \frac{e^{-H/k_BT}}{{\rm tr}(e^{-H/k_BT})},
\end{align}
where $k_B$ is the Boltzmann constant, and we keep the Hamiltonian/temperature dependence implicit.
Now, suppose the system is prepared in a non-equilibrium state $\rho$. 
Then, one can extract the following optimal amount of work from it~\cite{Skrzypczyk2014NC}:
\begin{align}\label{Eq:W}
W(\rho,H) \coloneqq (k_BT\ln2)D\left(\rho\,\|\,\gamma\right),
\end{align}
where $D(\rho\,\|\,\sigma)\coloneqq{\rm tr}\left[\rho\left(\log_2\rho - \log_2\sigma\right)\right]$ is the \CY{quantum (Umegaki) relative entropy~\cite{Umegaki1962}.
When $H=0$, Eq.~\eqref{Eq:W} gives the optimal work extractable from $\rho$'s information content:}
\begin{align}
W_{{\rm inf}}(\rho) \coloneqq W(\rho,H=0) = (k_BT\ln2)D\left(\rho\,\|\,\id/d\right).
\end{align}
\CY{Inspired by Ref.~\cite{Oppenheim2002PRL}, we introduce the following figure of merit in the companion letter~\cite{Companion-letter}, termed {\em energy storage enhancement} of the state $\rho$ with the Hamiltonian $H$:}
\begin{align}\label{Eq:Work diff}
\Wdiff(\rho,H)\coloneqq W(\rho,H) - W_{{\rm inf}}(\rho).
\end{align}
\CY{As mentioned in the companion letter~\cite{Companion-letter}, $\Wdiff$ carries a rather simple physical meaning---when we prepare the system in $\rho$, the value $\Wdiff(\rho,H)$ describes the ``enhancement'' of energy storage allowed by quenching the Hamiltonian from $0$ (i.e., the fully degenerate one) to $H$. See also Fig.~1 in the companion letter~\cite{Companion-letter}.}
\CY{From a purely operational perspective}, $\Wdiff$ measures the difference between two types of work extraction: One is subject to the given Hamiltonian $H$, and one addresses solely the information content of $\rho$. 
Hence, \CY{to evaluate it for} a given state $\rho$, one simply runs two different work extraction experiments (e.g., by using the protocol introduced in Ref.~\cite{Skrzypczyk2014NC}): One with the given Hamiltonian $H$, and another one with a vanishing Hamiltonian.
After that, one computes the difference between the amounts of the extracted work.

\CY{
\subsection{Work Extraction Game for Channel Ensembles}
As we demonstrate now, $\Wdiff$ can certify general quantum resources of channel ensembles. 
Our strategy is to use $\Wdiff$ to construct a work extraction game for channel ensembles in which quantum resoruces can be witnessed.
Let us start with a given ${\bm\Lambda}=\{\Sin_x\to\Sout_x\}_{x=1}^{|\bf x|}$ and a background temperature $0<T<\infty$.
Suppose $\mathfrak{F}_R$ is the free set of a given resource $R$. 
For every label $x$, we pick finitely many initial states $\rho_{i|x}$'s in $\Sin_x$ and Hamiltonians $H_{i|x}$'s of $\Sout_x$, both indexed by $i=1,...,|{\bf i}|$.
Collectively, we write a {\em game} as:
\begin{align}
G\coloneqq\left(\{\rho_{i|x}\}_{i,x},\{H_{i|x}\}_{i,x}\right).
\end{align} 
A fixed game $G$ defines a multi-round work extraction task as follows.
Consider a channel ensemble \mbox{$\mathbfcal{E}=\{\mathcal{E}_x:\Sin_x\to\Sout_x\}_{x=1}^{|\bf x|}$}. 
In each round, an index $x$ is chosen randomly with equal probability $1/|{\bf x}|$.
After that, a pair of initial state and Hamiltonian from $G$, i.e., $(\rho_{i|x},H_{i|x})$, is further selected randomly with equal probability $1/|{\bf i}|$.
Then, we run work extraction to evaluate $\Wdiff\left[\mathcal{E}_x(\rho_{i|x}),H_{i|x}\right]$~\footnote{\CY{This thus requires several shots of experiments so that we can evaluate both $W_{\rm inf}$ and $W$.}}.
After sufficiently many rounds, the {\em score} of $\mathbfcal{E}$ in the game $G$ reads
\begin{align}\label{Eq:score}
\Delta_{\rm ave}(\mathbfcal{E},G)\coloneqq\frac{1}{|{\bf x}|}\sum_{x}\frac{1}{|\bf i|}\sum_i\Wdiff\left[\mathcal{E}_x(\rho_{i|x}),H_{i|x}\right].
\end{align}
This is $\Wdiff$ of $\mathcal{E}_x(\rho_{i|x})$ with $H_{i|x}$ averaged over $i,x$.
Namely, it is $\mathbfcal{E}$'s average performance in obtaining $\Delta$ with the setting specified by $G$.
Finally, we say $G=\left(\{\rho_{i|x}\}_{i,x},\{H_{i|x}\}_{i,x}\right)$ has positive Hamiltonians if $H_{i|x}>0$ $\forall\,i,x$.
Now, we can show that} such work extraction tasks form a URCC:
\begin{result}\label{AppResult:ThermoOperationalInterpretation}
Consider a given ${\bm\Lambda}=\{\Sin_x\to\Sout_x\}_{x=1}^L$ and temperature $0<T<\infty$.
Let $\mathfrak{F}_{R}$ be convex and compact. 
Then the following two statements are equivalent:
\begin{enumerate}
\item $\mathbfcal{E}\notin\mathfrak{F}_{R}$. That is, $\mathbfcal{E}$ is $R$-resourceful.
\item There exist finitely many Hamiltonians $\{H_{i|x}>0\}_{i,x}$ and states $\{\rho_{i|x}\}_{i,x}$ such that
\begin{align}\label{Eq:main result}
\CY{\Delta_{\rm ave}(\mathbfcal{E},G)>\max_{\mathbfcal{L}\in\mathfrak{F}_R}\Delta_{\rm ave}(\mathbfcal{L},G),}
\end{align}
\CY{where the maximisation is taken over all free channel ensemble denoted by $\mathbfcal{L}\in\mathfrak{F}_R$.}
\end{enumerate}
\end{result}
\begin{proof}
It suffices to show that statement 1 implies statement 2.
Suppose \mbox{$\mathbfcal{E}\notin\mathfrak{F}_{R}$.}
Then, with the listed assumptions, Theorem~\ref{Result:ImplicationDynamicalResourceTheory} implies that there is a positive ensemble state discrimination task $D\coloneqq(\{p_x\}_x,\{q_{i|x},\rho_{i|x}\}_{i,x},\{E_{i|x}\}_{i,x})$ such that
\begin{align}\label{Eq:computation proof thermo witness 001}
P\left(D,\mathbfcal{E}\right) > \max_{\mathbfcal{L}\in\mathfrak{F}_{R}}P\left(D,\mathbfcal{L}\right).
\end{align}
Define
\begin{align}\label{Eq: computation proof Hamiltonian def}
H_{i|x}\coloneqq(k_BT\ln2)p_xq_{i|x}E_{i|x},
\end{align}
which is a Hamiltonian in $\Sout_x$, and it is strictly positive as long as $T>0$.
Combining Eqs.~\eqref{Eq:P succ},~\eqref{Eq:computation proof thermo witness 001}, and~\eqref{Eq: computation proof Hamiltonian def} give
\begin{align}\label{AppEq:CompWorkExtractionDynamicalResource}
&\sum_{i,x}{\rm tr}[H_{i|x}\mathcal{E}_x(\rho_{i|x})] > \max_{\mathbfcal{L}\in\mathfrak{F}_{R}}\sum_{i,x}{\rm tr}[H_{i|x}\mathcal{L}_x(\rho_{i|x})],
\end{align}
Finally, we note the following formula
\begin{align}\label{Eq:WgapRelation}
&\frac{\Wdiff\left(\eta,H_{i|x}\right)}{k_BT\ln2} = \frac{{\rm tr}\left(H_{i|x}\eta\right)}{k_BT\ln2} + \log_2{\rm tr}\left(e^{-\frac{H_{i|x}}{k_BT}}\right) - \log_2d_{x},
\end{align}
where $d_x$ is the dimension of $\Sout_x$.
Then, adding the constant 
\begin{align}
(k_BT\ln2)\sum_{i,x}\left[\log_2{\rm tr}\left(e^{-\frac{H_{i|x}}{k_BT}}\right) - \log_2d_x\right]
\end{align} 
on both sides of Eq.~\eqref{AppEq:CompWorkExtractionDynamicalResource}, which preserves the strict inequality, and using $\Delta_{\rm ave}$'s definition [Eq.~\eqref{Eq:score}] conclude the proof.
\end{proof}

\CY{
Theorem~\ref{AppResult:ThermoOperationalInterpretation} implies that, as an answer to our central question, the thermodynamic tasks characterised by $\Wdiff$ [Eq.~\eqref{Eq:Work diff}] serve as the first thermodynamic URCC.
Importantly, as discussed in detail in the companion letter~\cite{Companion-letter}, since $\Wdiff$ can be naturally interpreted as an enhancement of energy storage, Theorem~\ref{AppResult:ThermoOperationalInterpretation} also implies that general quantum resources can be witnessed by quantum batteries.
We refer the readers to the companion letter~\cite{Companion-letter} for further details as well as $\Delta$'s applications to characterise state conversions. 

}

\CY{
\subsection{Implications of Theorem~\ref{AppResult:ThermoOperationalInterpretation}}
When one considers specific settings, simpler versions can be obtained. 
First, when $|{\bf x}|=1$, we have a result for channel resources.
In this case, a game $G$ becomes a collection of the form $(\{\rho_i\}_i,\{H_i\}_i)$, and we have
\begin{result-coro}\label{coro:channel}
Let $\mathfrak{F}_R$ be a convex and compact set of channels. 
Then $\mathcal{E}\notin\mathfrak{F}_R$ if and only if there exist states $\{\rho_i\}_i$ and Hamiltonians $\{H_i>0\}_i$ such that
\begin{align}
\frac{1}{|{\bf i}|}\sum_{i}\Wdiff\left[\mathcal{E}(\rho_{i}),H_{i}\right]>\max_{\mathcal{L}\in\mathfrak{F}_R}\frac{1}{|{\bf i}|}\sum_{i}\Wdiff\left[\mathcal{L}(\rho_{i}),H_{i}\right].
\end{align}
\end{result-coro}
In this sense, work extraction advantages can be provided by general channel resources from, e.g., quantum memory~\cite{Rosset2018PRX,Ku2022PRXQ,Yuan2021npjQI}, communication~\cite{Takagi2020PRL,Hsieh2021PRXQ}, and thermodynamics~\cite{Hsieh2021PRXQ,Hsieh2025PRL,Hsieh2025PRA,Stratton2023}.
Furthermore, when we focus on state-preparation channels, we obtain a result for state resources, $R_{\rm state}$, where a game $G$ becomes a {\em single} Hamiltonian:
\begin{result-coro}\label{coro:state}
Let $\mathfrak{F}_{R_{\rm state}}$ be a convex and compact set of states. 
Then \mbox{$\rho\notin\mathfrak{F}_{R_{\rm state}}$} if and only if there exists a Hamiltonian $H>0$ such that
\begin{align}
\Wdiff\left(\rho,H\right)>\max_{\eta\in\mathfrak{F}_{R_{\rm state}}}\Wdiff\left(\eta,H\right).
\end{align}
\end{result-coro}
\begin{proof}
To start with, let
$
\mathfrak{F}_R\coloneqq\{\eta{\rm tr}(\cdot)\,|\,\eta\in\mathfrak{F}_{R_{\rm state}}\},
$
which is the set of all state-preparation channels that output some state in $\mathfrak{F}_{R_{\rm state}}$. This set is convex and compact when $\mathfrak{F}_{R_{\rm state}}$ is so. Using Eq.~\eqref{AppEq:CompWorkExtractionDynamicalResource} for a state-preparation channel $\mathcal{E}(\cdot) = \rho{\rm tr}(\cdot)$ and applying Eq.~\eqref{Eq:WgapRelation} complete the proof.
\end{proof}
Hence, here, we provide an alternative proof of Theorem 3 in the companion letter~\cite{Companion-letter}.
}

\section{Certifying entanglement by one-sided device-independent work extraction tasks}\label{Sec:1SDI}
From Theorem~\ref{AppResult:ThermoOperationalInterpretation}, it is clear that we can use work extraction to certify state resources, such as entanglement.
This, however, requires trusted devices; namely, we need to fully control our work extraction system.
It is interesting to notice that the technique adopted in proving Theorem~\ref{AppResult:ThermoOperationalInterpretation} also implies that entanglement can be certified by work extraction tasks in a {\em one-sided device-independent} way.
More precisely, to certify the entanglement of a bipartite state, we can show that it suffices to extract work from {\em one side} of it without trusting the other party's operations.

\CY{\subsection{Quantum Steering: A Recap}}
To formalise our result, we briefly recap the notion of {\em quantum steering}, \CY{or simply {\em steering}}~\cite{EPR,Schrodinger1935,Schrodinger1936,Wiseman2007PRL,Jones2007PRA,UolaRMP2020,Cavalcanti2016}, \CY{which is a multi-round} scenario with two agents $A$ and $B$.
\CY{Suppose they share a bipartite state $\rho_{AB}$.
Then, the goal of the {\em steering scenario} is for the agent $B$ to certify that $\rho_{AB}$ is entangled {\em without} trusting the agent $A$ after all.
To define a steering scenario, the first thing for $B$ to do is to ask $A$ to prepare a collection of quantum measurements locally in $A$'s lab. Mathematically, this is described by ${\bf E}=\{E_{a|x}\}_{a,x}$, where for every fixed $x$, the set $\{E_{a|x}\}_a$ is a POVM with outcomes labelled by $a$.
We call the collection ${\bf E}$ a {\em measurement assemblage}.

Once a measurement assemblage ${\bf E}$ is specified, the agents can start the steering scenario.}
\CY{During each round, first,} agent $A$ selects a measurement setting $x$ \CY{uniformly.} \CY{Then, $A$ performs the corresponding POVM $\{E_{a|x}\}_a$ from ${\bf E}$, obtaining the outcome $a$.}
\CY{After that, $A$ informs $B$} about the measurement setting and \CY{outcome, indicated by the index pair} $(a,x)$. 
\CY{For a given $x$ (and note that each $x$ has equal probability to occur), the pair $(a,x)$} occurs with \CY{the conditional} probability 
\begin{align}
\CY{P(a|x) = {\rm tr}[(E_{a|x}\otimes\id_B)\rho_{AB}]}
\end{align}
and results in the conditional quantum state 
\begin{align}
\CY{\norsigma_{a|x}={\rm tr}_A[(E_{a|x}\otimes\id_B)\rho_{AB}]/P(a|x)}
\end{align}
\CY{in $B$'s system.} 
\CY{Overall, the final statistics and states that $B$ obtains can be collectively described by} the so-called {\em state assemblage}, \CY{which is} \CY{denoted by} \mbox{$\assem=\{\assemsigma_{a|x}\}_{a,x}$}~\cite{PhysRevA.88.032313} with 
\begin{align}\label{Eq:SA}
\assemsigma_{a|x}\coloneqq P(a|x)\norsigma_{a|x}\CY{={\rm tr}_A[(E_{a|x}\otimes\id_B)\rho_{AB}]}.
\end{align}
\CY{$\assem$ fully describes the steering scenario, as it collects the ``data,'' including post-measurement quantum states and classical statistics, that B will obtain after sufficiently many rounds.}
\CY{Note that, by construction of Eq.~\eqref{Eq:SA}, we have} 
\begin{align}
\sum_a\assemsigma_{a|x} = \sum_a\assemsigma_{a|y}\eqqcolon\rho_B\quad\forall\,x,y,
\end{align}
\CY{which is usually called} the no-signalling condition for steering, where $\rho_B$ is the reduced density matrix for $B$. 
\CY{Physically, this means that $A$ cannot change $B$'s average quantum state by changing the measurement setting $x$; namely, $A$ cannot ``signal'' $B$'s local quantum output by their local classical input.}

\CY{Crucially, as mentioned earlier, the steering scenario aims to help $B$ to certify $\rho_{AB}$'s entanglement. Hence, let us now examine what would happen if $\rho_{AB}$ is not entangled; i.e., separable.
In that case, we can write 
\begin{align}
\CY{\rho_{AB}=\rho_{AB}^{(\rm sep)} = \sum_\lambda p(\lambda)\rho_A^{(\lambda)}\otimes\rho_B^{(\lambda)}}
\end{align}
with some probability distribution $\{p(\lambda)\}_\lambda$ and variable $\lambda$.
Then, \CY{$B$'s} obtained state assemblage, indicated by \mbox{$\assem^{(\rm sep)}\coloneqq\{\assemsigma^{(\rm sep)}_{a|x}\}_{a,x}$}, must take the form
\begin{align}
\assemsigma_{a|x}^{(\rm sep)}\stackrel{\rm LHS}{=}\sum_\lambda P(a|x,\lambda)P(\lambda)\rho_B^{(\lambda)}\quad\forall\,a,x,
\end{align}
where $P(a|x,\lambda)={\rm tr}(E_{a|x}\rho_A^{(\lambda)})$.
This is the so-called {\em local hidden-state} (LHS) model ~\cite{EPR,Schrodinger1935,Schrodinger1936,Wiseman2007PRL,Jones2007PRA,UolaRMP2020,Cavalcanti2016}.
Let ${\bf LHS}$ denote the set of all state assemblages that admit some LHS representation.
An important physical implication is that, \CY{if $B$ obtains} a state assemblage $\assem$ that {\em cannot} be described by {\em any} LHS model (i.e., $\assem\notin{\bf LHS}$, which is called {\em steerable}), then, immediately, $B$ can conclude that the state $\rho_{AB}$ {\em must} be entangled.
Crucially, as all $B$ needs to do is to examine ingredients in $\assem$, which are all locally obtainable in $B$'s lab, the agent $B$ can certify $\assem\notin{\bf LHS}$ {\em without} trusting $A$ after all---this is the {\em one-sided} device-independent certification of entanglement~\cite{UolaRMP2020,Cavalcanti2016}, which is also the foundation of one-sided device-independent quantum information processing~\cite{Branciard2012PRA,PhysRevLett.114.060404,PhysRevX.5.041008}.}

\CY{Based on the above discussion, for steering, we thus have} the free set $\mathfrak{F}_{\rm steering}={\bf LHS}$.
\CY{Any state assemblage that is outside this set are said to demonstrate quantum steering.}
Importantly, instead of checking the whole set ${\bf LHS}$ to certify steering, it suffices to check $\assem\notin{\bf LHS}(\assem)$, where ${\bf LHS}({\assem})$ denotes the set of state assemblages \mbox{$\assemLHS = \{\assemtau_{a|x}\}_{a,x}\in{\bf LHS}$} satisfying ${\rm tr}(\assemtau_{a|x}) = \CY{{\rm tr}(\assemsigma_{a|x})}$ for every $a,x$. Physically, these are LHS state assemblages that are indistinguishable from $\assem$ based on the classical statistics that $B$ receives from $A$.
\CY{To show the claim,} first, note that $\assem\in{\bf LHS}$ implies that $\assem\in{\bf LHS}(\assem)$.
On the other hand, since ${\bf LHS}(\assem)\subseteq{\bf LHS}$, $\assem\notin{\bf LHS}$ implies that $\assem\notin{\bf LHS}(\assem)$.
Hence, we have
\begin{align}\label{Eq:useful lemma}
\assem\notin{\bf LHS}\quad\text{\em if and only if}\quad\assem\notin{\bf LHS}(\assem),
\end{align}
\CY{and it suffices to check $\assem\notin{\bf LHS}(\assem)$ to achieve one-sided device-independent entanglement certification.}

\CY{\subsection{One-sided Device-Independent Work Extraction Tasks}}
We are now in the position to detail the local work extraction task that only involves \CY{$B$'s system.} Consider a given state assemblage $\assem$ and a given collection of Hamiltonians ${\bf H}\coloneqq\{H_{a|x}\}_{a,x}$ in \CY{$B$'s system}. Upon receiving the information $(a,x)$ about the measurement setting and \CY{outcome} of agent $A$, agent $B$ abruptly changes the local Hamiltonian to $H_{a|x}$ (without changing the quantum state) and then extracts work from the conditional state $\norsigma_{a|x}$. Note that, in contrast to the sufficient steering witnesses of, e.g., Refs.~\cite{PhysRevA.40.913,Yadin2021NC}, here, each pair $(a,x)$ may need a different Hamiltonian. 
With uniformly distributed $x$ [i.e., $P(x)$ is a constant in $x$], we obtain the average extractable work as
\begin{align}
\overline{W}({\assem},{\bf H})\coloneqq\sum_{a,x}P(a,x)W\left(\norsigma_{a|x},H_{a|x}\right),
\end{align}
where \mbox{$P(a,x) = P(x)P(a|x)$.}
Next, the two agents implement the same protocol as before, the only difference being that this time, \CY{$B$'s local Hamiltonian} is always fully degenerate.
This corresponds to the average extractable work
\begin{align}
\overline{W}_{\rm inf}({\bm\sigma})\coloneqq\sum_{a,x}P(a,x)W_{\rm inf}\left(\norsigma_{a|x}\right).
\end{align}
We again consider the deficit-type figure-of-merit
\begin{align}
\overline{\Wdiff}({\bm\sigma},{\bf H})&\coloneqq\overline{W}({\bm\sigma},{\bf H})-\overline{W}_{\rm inf}({\bm\sigma})\nonumber\\
&=\sum_{a,x}P(a,x)\Wdiff\left(\norsigma_{a|x},H_{a|x}\right),
\end{align}
which can be understood as \CY{the energy storage enhancement $\Delta$ [Eq.~\eqref{Eq:Work diff}]} conditioned on the indices $a,x$ from steering scenarios. It turns out that this class of thermodynamic work extraction tasks is necessary and sufficient to certify quantum steering \CY{and thus relevant to one-sided device-independent entanglement certification}. Specifically, we have
\begin{result}\label{Result:Steering}
${\assem}\notin{\bf LHS}$ if and only if, for every \mbox{$0<T<\infty$}, there exist Hamiltonians \mbox{${\bf H} = \{H_{a|x}\ge0\}_{a,x}$} such that
\begin{align}\label{Eq:result2}
\overline{\Wdiff}({\assem},{\bf H})>\max_{{\assemLHS}{\in{\bf LHS}(\assem)}}\overline{\Wdiff}({\assemLHS},{\bf H}).
\end{align}
\end{result}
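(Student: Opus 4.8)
The plan is to exploit the fact that, by Eq.~\eqref{Eq:WgapRelation}, $\Wdiff(\eta,H)$ is an \emph{affine} function of the state $\eta$, so that the thermodynamic inequality~\eqref{Eq:result2} collapses into an ordinary hyperplane separation between the point ${\bm\sigma}$ and the convex set ${\bf LHS}({\bm\sigma})$. Concretely, writing $\Wdiff(\eta,H_{a|x})={\rm tr}(H_{a|x}\eta)+C_{a|x}$ with the state-independent constant $C_{a|x}\coloneqq(k_BT\ln2)[\log_2{\rm tr}(e^{-H_{a|x}/(k_BT)})-\log_2 d_B]$ ($d_B\coloneqq\dim B$), and using $P(a,x)\norsigma_{a|x}=P(x)\assemsigma_{a|x}$ with $P(x)$ constant, I would first establish
\begin{align}\label{Eq:affinecollapse}
\overline{\Wdiff}({\bm\sigma},{\bf H})=P(x)\sum_{a,x}{\rm tr}(H_{a|x}\assemsigma_{a|x})+K,
\end{align}
where $K\coloneqq\sum_{a,x}P(a,x)C_{a|x}$ depends only on ${\bf H}$ and on the marginals $P(a|x)={\rm tr}(\assemsigma_{a|x})$. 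The key structural point is that every ${\bm\tau}\in{\bf LHS}({\bm\sigma})$ has, by definition, exactly these marginals, ${\rm tr}(\assemtau_{a|x})=P(a|x)$, so $\overline{\Wdiff}({\bm\tau},{\bf H})$ carries the \emph{same} constant $K$. Dividing by $P(x)>0$, the inequality~\eqref{Eq:result2} is therefore equivalent to the purely linear separation $\sum_{a,x}{\rm tr}(H_{a|x}\assemsigma_{a|x})>\max_{{\bm\tau}\in{\bf LHS}({\bm\sigma})}\sum_{a,x}{\rm tr}(H_{a|x}\assemtau_{a|x})$.

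Given this reduction, the ``if'' direction is immediate by contradiction: if Hamiltonians satisfying~\eqref{Eq:result2} exist for some $T$ while ${\bm\sigma}\in{\bf LHS}$, then ${\bm\sigma}\in{\bf LHS}({\bm\sigma})$ is itself admissible in the maximisation, forcing the right-hand side of~\eqref{Eq:result2} to be at least $\overline{\Wdiff}({\bm\sigma},{\bf H})$ and contradicting strictness. For the ``only if'' direction I would invoke the equivalence ${\bm\sigma}\notin{\bf LHS}\Leftrightarrow{\bm\sigma}\notin{\bf LHS}({\bm\sigma})$ established above, note that ${\bf LHS}({\bm\sigma})$ is convex, compact (LHS assemblages admit a Carath\'eodory-bounded hidden-variable decomposition, and intersecting with the closed affine marginal constraint preserves compactness), and nonempty (it contains the uncorrelated model $\assemtau_{a|x}=P(a|x)\rho_B$), and apply the strict separating hyperplane theorem. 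This yields Hermitian operators $\{W_{a|x}\}$ with $\sum_{a,x}{\rm tr}(W_{a|x}\assemsigma_{a|x})>\max_{{\bm\tau}\in{\bf LHS}({\bm\sigma})}\sum_{a,x}{\rm tr}(W_{a|x}\assemtau_{a|x})$.

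The main obstacle is that this separating witness need not be positive, whereas~\eqref{Eq:result2} requires $H_{a|x}\ge0$; here the marginal constraint defining ${\bf LHS}({\bm\sigma})$ is decisive. Setting $H_{a|x}\coloneqq W_{a|x}+\lambda_{a|x}\id$ with $\lambda_{a|x}\ge-\lambda_{\min}(W_{a|x})$ makes every $H_{a|x}\ge0$, while shifting both sides of the linear inequality by the identical amount $\sum_{a,x}\lambda_{a|x}P(a|x)$, because ${\rm tr}(\assemsigma_{a|x})={\rm tr}(\assemtau_{a|x})=P(a|x)$ for all admissible ${\bm\tau}$. The strict inequality thus survives, and running the affine rewriting~\eqref{Eq:affinecollapse} backwards turns it back into~\eqref{Eq:result2}. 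Finally, since $\{W_{a|x}\}$ and the shifts are temperature-independent while the cancellation of $K$ holds at each fixed $0<T<\infty$, the same Hamiltonians serve every temperature simultaneously, which is exactly what the ``for every $T$'' quantifier requires.
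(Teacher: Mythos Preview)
Your argument is correct and essentially parallel to the paper's, but the two differ in how the positive separating operators are obtained. The paper does not invoke a generic hyperplane theorem; instead it calls the dual semidefinite programme of the steering robustness (citing Refs.~\cite{Cavalcanti2016,Piani2015,Hsieh2016}), which immediately furnishes operators $F_{a|x}\ge0$ that strictly separate $\assem$ from \emph{all} of ${\bf LHS}$ (and hence from ${\bf LHS}(\assem)\subseteq{\bf LHS}$). Setting $H_{a|x}=(k_BT\ln2)F_{a|x}$ and adding the same constant $K$ you identified finishes the proof. Your route is more elementary and self-contained: you separate only from ${\bf LHS}(\assem)$, obtain possibly indefinite $W_{a|x}$, and then use the marginal constraint ${\rm tr}(\assemtau_{a|x})=P(a|x)$ a second time to shift each $W_{a|x}$ by $\lambda_{a|x}\id$ without spoiling the inequality. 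This makes transparent \emph{why} the restricted set ${\bf LHS}(\assem)$ (rather than ${\bf LHS}$) appears in the statement---it is precisely what licenses both the cancellation of $K$ and the positivity shift---and it yields temperature-independent Hamiltonians, whereas the paper's $H_{a|x}$ scale with $T$. The paper's approach, in turn, is shorter and connects the result directly to a standard quantifier of steering.
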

\begin{proof}
Since $\assem\notin{\bf LHS}$ if and only if $\assem\notin{\bf LHS}(\assem)$, \CY{as in Eq.~\eqref{Eq:useful lemma},} it suffices to show that $\assem\notin{\bf LHS}(\assem)$ implies the desired strict inequality for every given $0<T<\infty$.
Consider one such state assemblage $\assem$.
Then, by the dual programme of steering robustness (see, e.g., Eq.~(41) in Ref.~\cite{Cavalcanti2016}, Eq.~(8) in Ref.~\cite{Piani2015}, Eq.~(28) in Ref.~\cite{Hsieh2016}), there exists a collection of semi-definite positive operators $\{F_{a|x}\ge0\}_{a,x}$ such that
\begin{align}\label{Eq:Compu01}
\CY{\sum_{a,x}{\rm tr}\left(F_{a|x}\assemsigma_{a|x}\right) > \max_{{\assemLHS}\in{\bf LHS}}\sum_{a,x}{\rm tr}\left(F_{a|x}\assemtau_{a|x}\right).}
\end{align}
\CY{Now, by reducing the maximisation range, we have
\begin{align}
\max_{{\assemLHS}\in{\bf LHS}}\sum_{a,x}{\rm tr}\left(F_{a|x}\assemtau_{a|x}\right)\ge\max_{{\assemLHS}\in{\bf LHS}(\assem)}\sum_{a,x}{\rm tr}\left(F_{a|x}\assemtau_{a|x}\right).
\end{align}
Combining the above two inequalities give
\begin{align}\label{Eq:Compu02}
\sum_{a,x}{\rm tr}\left(F_{a|x}\assemsigma_{a|x}\right) > \max_{{\assemLHS}\in{\bf LHS}(\assem)}\sum_{a,x}{\rm tr}\left(F_{a|x}\assemtau_{a|x}\right).
\end{align}
}
Note that we need \CY{the above form} to obtain $\overline{\Delta}$ with appropriate $P(a,x)$. 
Defining Hamiltonians
\begin{align}
H_{a|x}\coloneqq(k_BT\ln2)F_{a|x}.
\end{align}
Then, using Eq.~\eqref{Eq:WgapRelation}, and adding the constant 
\begin{align}
\sum_{a,x}P(a,x)\left[\log_2{\rm tr}\left(e^{-\frac{H_{a|x}}{k_BT}}\right) - \log_2d_{x}\right]
\end{align}
to both sides of Eq.~\eqref{Eq:Compu02}, the desired result follows by multiplying $k_BT\ln2$.
\end{proof}

Our finding thus demonstrates that steering is the necessary and sufficient resource for the advantage in \CY{the one-sided device-independent work extraction tasks introduced above.}
This further complements the recent findings reported in Refs.~\cite{JiPRL2022,BeyerPRL2019,ChanPRA2022,Hsieh2024,Biswas2025PRL}.
Theorem~\ref{Result:Steering} thus implies that entanglement distributed among two agents can be thermodynamically certified by one of them without trusting the other. More precisely, by obtaining the extractable work $\overline{\Wdiff}(\assem,{\bf H})$ (which can be done in $B$ with the classical statistics announced by $A$, e.g., via a public channel), Theorem~\ref{Result:Steering} implies that the local agent $B$ can certify that $\assem\notin{\bf LHS}$ if the obtained work value is strictly higher than $\max_{{\assemLHS}\in{\bf LHS}(\assem)}\overline{\Wdiff}({\assemLHS},{\bf H})$ (which is a value that $B$ can compute).
In that case, the state assemblage $\assem\notin{\bf LHS}$ cannot be produced by any separable state in a steering scenario, and, consequently, $\rho_{AB}$ must be entangled.

\subsection{Application: Local Anomalous Energy Flow Generated by Entanglement and Incompatible Measurements}
As an application, Theorem~\ref{Result:Steering} uncovers a novel type of anomalous energy flow. 
In the literature, anomalous energy flow refers to energy flow that classical systems cannot induce.
For instance, energy mainly flows from the hot to the cold systems in classical thermodynamics, while quantum correlation can reverse this time arrow, leading to the anomalous heat backflow (see, e.g., Refs.~\cite{Lipkabartosik2023,Jennings2010PRE}).
Here, Theorem~\ref{Result:Steering} suggests that a certain type of locally observed energy flow implies the presence of globally distributed entanglement.
To see this, let us define
\begin{align}\label{Eq:anomalous energy flow}
\CY{E_{\rm anomaly}}(\assem)\coloneqq\max_{{\bf H}}\left(\overline{\Wdiff}({\assem},{\bf H})-\max_{{\assemLHS}\in{\bf LHS}(\assem)}\overline{\Wdiff}({{\assemLHS}},{\bf H})\right),
\end{align}
where the maximisation is taken over all possible local Hamiltonians. Then, when $\CY{E_{\rm anomaly}}(\assem)>0$ for some properly designed local Hamiltonians ${\bf H}$ in \CY{$B$'s system, the local agent $B$ concludes} that there must be some entangled state behind the entire scenario---as no separable state can produce this amount of energy.
We thus conclude that $\CY{E_{\rm anomaly}}(\assem)$ is the maximal amount of energy that is certainly induced by entanglement. Importantly, the local agent can certify this novel type of anomalous energy flow without accessing the bipartite structure.
We thus discover a novel type of one-sided device-independent anomalous energy flow (see also Refs.~\cite{BeyerPRL2019,JiPRL2022,ChanPRA2022}).

As another illustrative application of Theorem~\ref{Result:Steering}, let us relax the one-sided device-independent assumption and always consider $A$ and $B$ sharing a maximally entangled state $\ket{\Phi^+}_{AB}\coloneqq\sum_{i=0}^{d-1}(1/\sqrt{d})\ket{ii}_{AB}$. Then, Theorem~\ref{Result:Steering} implies that the local anomalous energy flow $\CY{E_{\rm anomaly}}(\assem)$ is equivalent to {\em measurement incompatibility} in the system $A$. 
\CY{More precisely, a measurement assemblage ${\bf M}=\{M_{a|x}\}_{a,x}$ is said to be {\em jointly measurable}, or {\em compatible}, if there exists a single POVM $\{G_\lambda\}_\lambda$ and some conditional probability distribution $\{P(a|x,\lambda)\}_{a,x,\lambda}$ achieving~\cite{Otfried2021Rev,UolaRMP2020}
\begin{align}
M_{a|x}\stackrel{\rm JM}{=}\sum_\lambda P(a|x,\lambda)G_\lambda\quad\forall\,a,x.
\end{align}
Namely, POVMs in ${\bf M}$ can be simulated by a {\em single} POVM plus some purely classical processing.
A measurement assemblage is called {\em incompatible} if it is not compatible.}
\CY{Now, suppose in a steering scenario $A$ and $B$ already share} a maximally entangled state $\ket{\Phi^+}_{AB}$.
\CY{When $A$ is assigned to perform the measurement assemblage ${\bf E}$, $B$'s state assemblage ${\assem}^+({\bf E})\coloneqq\assemsigma^+_{a|x}({\bf E})$ takes the form} 
\begin{align}
\CY{
\assemsigma^+_{a|x}({\bf E})\coloneqq{\rm tr}_A\left[(E_{a|x}\otimes\id_B)\proj{\Phi^+}_{AB}\right]\quad\forall\,a,x.
}
\end{align} 
\CY{Crucially,} it is known that \CY{${\assem}^+({\bf E})$} is steerable {\em if and only if} \CY{${\bf E}$} is incompatible~\cite{Otfried2021Rev,UolaRMP2020}. \CY{Using this fact with} Theorem~\ref{Result:Steering} gives
\begin{result-coro}
The measurement assemblage \CY{${\bf E}$} is incompatible if and only if $\CY{E_{\rm anomaly}}\left[{\assem}^+(\CY{{\bf E}})\right]>0$.
\end{result-coro}
Hence, we reveal another physical underpinning of the anomalous energy flow $\CY{E_{\rm anomaly}}$, which are incompatible measurements in \CY{$A$'s system}. This result thus establishes a thermodynamic meaning for incompatible measurements---quantum incompatibility and entanglement can produce locally extractable energy flows that cannot be observed in any classical system.

Finally, as a remark, it is clear that any nontrivial steering inequality can induce some Hamiltonians to certify steering via the anomalous energy flow defined in Eq.~\eqref{Eq:anomalous energy flow}. 
This allows us to explicitly construct Hamiltonians for, e.g., the {\em superactivation} of the anomalous energy flow. 
More precisely, superactivation of steering refers to situations where a bipartite quantum state $\rho_{AB}$ cannot demonstrate steering (i.e., state assemblages induced by it are in ${\bf LHS}$), whereas at the same time, $\rho_{AB}^{\otimes k}$ can violate some steering inequalities when $k$ is large enough~\cite{Hsieh2016,QuintinoPRA2016}. Using Theorem~\ref{Result:Steering}, we deduce that such a $\rho_{AB}$ cannot induce any anomalous energy flow defined in Eq.~\eqref{Eq:anomalous energy flow}, while $\rho_{AB}^{\otimes k}$ can do so with a large $k$. Moreover, the corresponding Hamiltonians can be explicitly found by using the steering inequality Eq.~(40) in Ref.~\cite{Hsieh2016} and multiplying it by $(k_BT\ln2)$, as in the proof of Theorem~\ref{Result:Steering}.
\CY{We leave further investigation for future studies.}

\section{Conclusions}\label{Sec:Conclusion}
In this work, we introduce a thermodynamic universal resource certification class based on work extraction, implying that general quantum resources can demonstrate advantages in work extraction tasks. 
Our approach further shows that locally performed work extraction can certify globally distributed entanglement in a one-sided device-independent way, and a novel type of anomalous energy flow underpinned by entanglement and measurement incompatibility is reported.

We now discuss the physical implications of our findings.
First, our results actually also imply that quantum complementarity~\cite{Hsieh2023} as well as different notions of quantum incompatibility (see, e.g., Refs.~\cite{Otfried2021Rev,Haapasalo2014,Girard2021NC,Viola2015, Haapasalo2021Quantum,Haapasalo2015,Carmeli2019PRL,Carmeli2019JMP,Hsieh2024}) can provide work extraction advantages.
The former can be viewed as a thermodynamic version of Theorem 2 in Ref.~\cite{Carmeli2019PRL} (see also Theorem 1 in Ref.~\cite{Uola2019PRL}), which also simplifies the operational task introduced in Ref.~\cite{Skrzypczyk2019}. 
The latter is related to Corollary 2 in Ref.~\cite{Carmeli2019JMP} and Theorem 1 in Ref.~\cite{Mori2020PRA}.
These findings, however, go beyond the scope of this work (as here we aim to focus on general quantum resources, rather than specific ones).
Consequently, we locate these findings in a follow-up paper focusing on thermodynamic advantages of the uncertainty principle's signature (especially incompatibility and complementarity), which is now in preparation~\cite{Hsieh-preparation}.

Furthermore, as mentioned earlier, it is unclear whether the technique used in this work can also apply to channel discrimination tasks as detailed in Theorem 4 in Ref.~\cite{Hsieh2023-2}.
More precisely, is it possible to turn the channel discrimination tasks into a novel thermodynamic URCC?
Also, can any {\em other} class of thermodynamics tasks serve as URCC?

As another relavent aspect, recently, it has been found that probabilistic tasks may also provide a universal characterisation of general quantum resources via the so-called filters (e.g., Theorem D.1 in Ref.~\cite{Hsieh2025PRA-3}). Notably, it suffices to focus on filters with a relatively simple form (the so-called ${\rm LF_1}$ or ${\rm F_1}$ filter, which are filters with a single Kraus operator; see also, e.g., Refs.~\cite{Ku2022NC,Ku2023,Hsieh2023}). It is then valuable, both theoretically and practically, to investigate the thermodynamic interpretation of such filters and their potential applications to characterising/certifying general quantum resources. More broadly, it is rewarding to extend the recent findings in probabilistic conversions (e.g., Refs.~\cite{Regula2022PRL,Regula2022Quantum}) to thermodynamics.

Finally, it may be insightful to apply our findings to specific resources, in particular, {\em thermodynamic} resources. For instance, when we apply our main finding in the companion letter~\cite{Companion-letter} to the so-called {\em informational non-equilibrium}~\cite{purity-review}, we can reproduce the famous majorisation conditions for unital channel conversions (see, e.g., Ref.~\cite{purity-review}).
Can we obtain more from our findings? 
In particular, recently, an Mpemba-like effect has been reported in the concentration problem of informational non-equilibrium~\cite{Hsieh2025PRA-inf}. 
Can our findings provide further insights in this direction?
We leave all these open questions for future research, and we hope to use this work to initiate the community's interest in the interface of general quantum resources and thermodynamics.

\section*{Acknowledgements}
We thank Antonio Ac\'in, Pharnam Bakhshinezhad, Shin-Liang Chen, Huan-Yu Ku, Patryk Lipka-Bartosik, Matteo Lostaglio, Paul Skrzypczyk, Alexander Streltsov, and Benjamin Stratton for fruitful discussions and comments.
C.-Y.~H. acknowledges support from the Royal Society through Enhanced Research Expenses (NFQI), the ERC Advanced Grant (FLQuant), and the Leverhulme Trust Early Career Fellowship (``Quantum complementarity: a novel resource for quantum science and technologies'' with Grant No.~ECF-2024-310). This work is supported by the project PID2023-152724NA-I00, with funding from MCIU/AEI/10.13039/501100011033 and FSE+, by the project CNS2024-154818 with funding by MICIU/AEI/10.13039/501100011033, by the project RYC2021-031094-I, with funding from MCIN/AEI/10.13039/501100011033 and the European Union ‘NextGenerationEU’ PRTR fund, by the project CIPROM/2022/66 with funding by the Generalitat Valenciana, and by the Ministry of Economic Affairs and Digital Transformation of the Spanish Government through the QUANTUM ENIA Project call—QUANTUM SPAIN Project, by the European Union through the Recovery, Transformation and Resilience Plan—NextGenerationEU within the framework of the Digital Spain 2026 Agenda, and by the CSIC Interdisciplinary Thematic Platform (PTI+) on Quantum Technologies (PTI-QTEP+). This work is supported through the project CEX2023-001292-S funded by MCIU/AEI.

\bibliography{Ref.bib}

\begin{thebibliography}{105}%
\makeatletter
\providecommand \@ifxundefined [1]{%
 \@ifx{#1\undefined}
}%
\providecommand \@ifnum [1]{%
 \ifnum #1\expandafter \@firstoftwo
 \else \expandafter \@secondoftwo
 \fi
}%
\providecommand \@ifx [1]{%
 \ifx #1\expandafter \@firstoftwo
 \else \expandafter \@secondoftwo
 \fi
}%
\providecommand \natexlab [1]{#1}%
\providecommand \enquote  [1]{``#1''}%
\providecommand \bibnamefont  [1]{#1}%
\providecommand \bibfnamefont [1]{#1}%
\providecommand \citenamefont [1]{#1}%
\providecommand \href@noop [0]{\@secondoftwo}%
\providecommand \href [0]{\begingroup \@sanitize@url \@href}%
\providecommand \@href[1]{\@@startlink{#1}\@@href}%
\providecommand \@@href[1]{\endgroup#1\@@endlink}%
\providecommand \@sanitize@url [0]{\catcode `\\12\catcode `\$12\catcode `\&12\catcode `\#12\catcode `\^12\catcode `\_12\catcode `\%12\relax}%
\providecommand \@@startlink[1]{}%
\providecommand \@@endlink[0]{}%
\providecommand \url  [0]{\begingroup\@sanitize@url \@url }%
\providecommand \@url [1]{\endgroup\@href {#1}{\urlprefix }}%
\providecommand \urlprefix  [0]{URL }%
\providecommand \Eprint [0]{\href }%
\providecommand \doibase [0]{https://doi.org/}%
\providecommand \selectlanguage [0]{\@gobble}%
\providecommand \bibinfo  [0]{\@secondoftwo}%
\providecommand \bibfield  [0]{\@secondoftwo}%
\providecommand \translation [1]{[#1]}%
\providecommand \BibitemOpen [0]{}%
\providecommand \bibitemStop [0]{}%
\providecommand \bibitemNoStop [0]{.\EOS\space}%
\providecommand \EOS [0]{\spacefactor3000\relax}%
\providecommand \BibitemShut  [1]{\csname bibitem#1\endcsname}%
\let\auto@bib@innerbib\@empty
\bibitem [{\citenamefont {Kuroiwa}\ \emph {et~al.}(2024{\natexlab{a}})\citenamefont {Kuroiwa}, \citenamefont {Takagi}, \citenamefont {Adesso},\ and\ \citenamefont {Yamasaki}}]{Kuroiwa2024PRL}%
  \BibitemOpen
  \bibfield  {author} {\bibinfo {author} {\bibfnamefont {K.}~\bibnamefont {Kuroiwa}}, \bibinfo {author} {\bibfnamefont {R.}~\bibnamefont {Takagi}}, \bibinfo {author} {\bibfnamefont {G.}~\bibnamefont {Adesso}},\ and\ \bibinfo {author} {\bibfnamefont {H.}~\bibnamefont {Yamasaki}},\ }\bibfield  {title} {\bibinfo {title} {Every quantum helps: Operational advantage of quantum resources beyond convexity},\ }\href {https://doi.org/10.1103/PhysRevLett.132.150201} {\bibfield  {journal} {\bibinfo  {journal} {Phys. Rev. Lett.}\ }\textbf {\bibinfo {volume} {132}},\ \bibinfo {pages} {150201} (\bibinfo {year} {2024}{\natexlab{a}})}\BibitemShut {NoStop}%
\bibitem [{\citenamefont {Kuroiwa}\ \emph {et~al.}(2024{\natexlab{b}})\citenamefont {Kuroiwa}, \citenamefont {Takagi}, \citenamefont {Adesso},\ and\ \citenamefont {Yamasaki}}]{Kuroiwa2024PRA}%
  \BibitemOpen
  \bibfield  {author} {\bibinfo {author} {\bibfnamefont {K.}~\bibnamefont {Kuroiwa}}, \bibinfo {author} {\bibfnamefont {R.}~\bibnamefont {Takagi}}, \bibinfo {author} {\bibfnamefont {G.}~\bibnamefont {Adesso}},\ and\ \bibinfo {author} {\bibfnamefont {H.}~\bibnamefont {Yamasaki}},\ }\bibfield  {title} {\bibinfo {title} {Robustness- and weight-based resource measures without convexity restriction: Multicopy witness and operational advantage in static and dynamical quantum resource theories},\ }\href {https://doi.org/10.1103/PhysRevA.109.042403} {\bibfield  {journal} {\bibinfo  {journal} {Phys. Rev. A}\ }\textbf {\bibinfo {volume} {109}},\ \bibinfo {pages} {042403} (\bibinfo {year} {2024}{\natexlab{b}})}\BibitemShut {NoStop}%
\bibitem [{\citenamefont {Meier}\ and\ \citenamefont {Yamasaki}(2025)}]{Meier2025PRXEnergy}%
  \BibitemOpen
  \bibfield  {author} {\bibinfo {author} {\bibfnamefont {F.}~\bibnamefont {Meier}}\ and\ \bibinfo {author} {\bibfnamefont {H.}~\bibnamefont {Yamasaki}},\ }\bibfield  {title} {\bibinfo {title} {Energy-consumption advantage of quantum computation},\ }\href {https://doi.org/10.1103/PRXEnergy.4.023008} {\bibfield  {journal} {\bibinfo  {journal} {PRX Energy}\ }\textbf {\bibinfo {volume} {4}},\ \bibinfo {pages} {023008} (\bibinfo {year} {2025})}\BibitemShut {NoStop}%
\bibitem [{\citenamefont {Bennett}\ \emph {et~al.}(1993)\citenamefont {Bennett}, \citenamefont {Brassard}, \citenamefont {Cr\'epeau}, \citenamefont {Jozsa}, \citenamefont {Peres},\ and\ \citenamefont {Wootters}}]{Bennett93}%
  \BibitemOpen
  \bibfield  {author} {\bibinfo {author} {\bibfnamefont {C.~H.}\ \bibnamefont {Bennett}}, \bibinfo {author} {\bibfnamefont {G.}~\bibnamefont {Brassard}}, \bibinfo {author} {\bibfnamefont {C.}~\bibnamefont {Cr\'epeau}}, \bibinfo {author} {\bibfnamefont {R.}~\bibnamefont {Jozsa}}, \bibinfo {author} {\bibfnamefont {A.}~\bibnamefont {Peres}},\ and\ \bibinfo {author} {\bibfnamefont {W.~K.}\ \bibnamefont {Wootters}},\ }\bibfield  {title} {\bibinfo {title} {Teleporting an unknown quantum state via dual classical and {E}instein-{P}odolsky-{R}osen channels},\ }\href {https://doi.org/10.1103/PhysRevLett.70.1895} {\bibfield  {journal} {\bibinfo  {journal} {Phys. Rev. Lett.}\ }\textbf {\bibinfo {volume} {70}},\ \bibinfo {pages} {1895} (\bibinfo {year} {1993})}\BibitemShut {NoStop}%
\bibitem [{\citenamefont {Bennett}\ and\ \citenamefont {Wiesner}(1992)}]{Bennett92}%
  \BibitemOpen
  \bibfield  {author} {\bibinfo {author} {\bibfnamefont {C.~H.}\ \bibnamefont {Bennett}}\ and\ \bibinfo {author} {\bibfnamefont {S.~J.}\ \bibnamefont {Wiesner}},\ }\bibfield  {title} {\bibinfo {title} {Communication via one- and two-particle operators on {E}instein-{P}odolsky-{R}osen states},\ }\href {https://doi.org/10.1103/PhysRevLett.69.2881} {\bibfield  {journal} {\bibinfo  {journal} {Phys. Rev. Lett.}\ }\textbf {\bibinfo {volume} {69}},\ \bibinfo {pages} {2881} (\bibinfo {year} {1992})}\BibitemShut {NoStop}%
\bibitem [{\citenamefont {Pezz\'e}\ and\ \citenamefont {Smerzi}(2009)}]{PhysRevLett.102.100401}%
  \BibitemOpen
  \bibfield  {author} {\bibinfo {author} {\bibfnamefont {L.}~\bibnamefont {Pezz\'e}}\ and\ \bibinfo {author} {\bibfnamefont {A.}~\bibnamefont {Smerzi}},\ }\bibfield  {title} {\bibinfo {title} {Entanglement, nonlinear dynamics, and the heisenberg limit},\ }\href {https://doi.org/10.1103/PhysRevLett.102.100401} {\bibfield  {journal} {\bibinfo  {journal} {Phys. Rev. Lett.}\ }\textbf {\bibinfo {volume} {102}},\ \bibinfo {pages} {100401} (\bibinfo {year} {2009})}\BibitemShut {NoStop}%
\bibitem [{\citenamefont {Pezz\`e}\ \emph {et~al.}(2018)\citenamefont {Pezz\`e}, \citenamefont {Smerzi}, \citenamefont {Oberthaler}, \citenamefont {Schmied},\ and\ \citenamefont {Treutlein}}]{RevModPhys.90.035005}%
  \BibitemOpen
  \bibfield  {author} {\bibinfo {author} {\bibfnamefont {L.}~\bibnamefont {Pezz\`e}}, \bibinfo {author} {\bibfnamefont {A.}~\bibnamefont {Smerzi}}, \bibinfo {author} {\bibfnamefont {M.~K.}\ \bibnamefont {Oberthaler}}, \bibinfo {author} {\bibfnamefont {R.}~\bibnamefont {Schmied}},\ and\ \bibinfo {author} {\bibfnamefont {P.}~\bibnamefont {Treutlein}},\ }\bibfield  {title} {\bibinfo {title} {Quantum metrology with nonclassical states of atomic ensembles},\ }\href {https://doi.org/10.1103/RevModPhys.90.035005} {\bibfield  {journal} {\bibinfo  {journal} {Rev. Mod. Phys.}\ }\textbf {\bibinfo {volume} {90}},\ \bibinfo {pages} {035005} (\bibinfo {year} {2018})}\BibitemShut {NoStop}%
\bibitem [{\citenamefont {Ren}\ \emph {et~al.}(2021)\citenamefont {Ren}, \citenamefont {Li}, \citenamefont {Smerzi},\ and\ \citenamefont {Gessner}}]{PhysRevLett.126.080502}%
  \BibitemOpen
  \bibfield  {author} {\bibinfo {author} {\bibfnamefont {Z.}~\bibnamefont {Ren}}, \bibinfo {author} {\bibfnamefont {W.}~\bibnamefont {Li}}, \bibinfo {author} {\bibfnamefont {A.}~\bibnamefont {Smerzi}},\ and\ \bibinfo {author} {\bibfnamefont {M.}~\bibnamefont {Gessner}},\ }\bibfield  {title} {\bibinfo {title} {Metrological detection of multipartite entanglement from young diagrams},\ }\href {https://doi.org/10.1103/PhysRevLett.126.080502} {\bibfield  {journal} {\bibinfo  {journal} {Phys. Rev. Lett.}\ }\textbf {\bibinfo {volume} {126}},\ \bibinfo {pages} {080502} (\bibinfo {year} {2021})}\BibitemShut {NoStop}%
\bibitem [{\citenamefont {Horodecki}\ \emph {et~al.}(2009)\citenamefont {Horodecki}, \citenamefont {Horodecki}, \citenamefont {Horodecki},\ and\ \citenamefont {Horodecki}}]{HorodeckiRMP}%
  \BibitemOpen
  \bibfield  {author} {\bibinfo {author} {\bibfnamefont {R.}~\bibnamefont {Horodecki}}, \bibinfo {author} {\bibfnamefont {P.}~\bibnamefont {Horodecki}}, \bibinfo {author} {\bibfnamefont {M.}~\bibnamefont {Horodecki}},\ and\ \bibinfo {author} {\bibfnamefont {K.}~\bibnamefont {Horodecki}},\ }\bibfield  {title} {\bibinfo {title} {Quantum entanglement},\ }\href {https://doi.org/10.1103/RevModPhys.81.865} {\bibfield  {journal} {\bibinfo  {journal} {Rev. Mod. Phys.}\ }\textbf {\bibinfo {volume} {81}},\ \bibinfo {pages} {865} (\bibinfo {year} {2009})}\BibitemShut {NoStop}%
\bibitem [{\citenamefont {Brunner}\ \emph {et~al.}(2014)\citenamefont {Brunner}, \citenamefont {Cavalcanti}, \citenamefont {Pironio}, \citenamefont {Scarani},\ and\ \citenamefont {Wehner}}]{Brunner2014RMP}%
  \BibitemOpen
  \bibfield  {author} {\bibinfo {author} {\bibfnamefont {N.}~\bibnamefont {Brunner}}, \bibinfo {author} {\bibfnamefont {D.}~\bibnamefont {Cavalcanti}}, \bibinfo {author} {\bibfnamefont {S.}~\bibnamefont {Pironio}}, \bibinfo {author} {\bibfnamefont {V.}~\bibnamefont {Scarani}},\ and\ \bibinfo {author} {\bibfnamefont {S.}~\bibnamefont {Wehner}},\ }\bibfield  {title} {\bibinfo {title} {Bell nonlocality},\ }\href {https://doi.org/10.1103/RevModPhys.86.419} {\bibfield  {journal} {\bibinfo  {journal} {Rev. Mod. Phys.}\ }\textbf {\bibinfo {volume} {86}},\ \bibinfo {pages} {419} (\bibinfo {year} {2014})}\BibitemShut {NoStop}%
\bibitem [{\citenamefont {Ac\'{\i}n}\ \emph {et~al.}(2007)\citenamefont {Ac\'{\i}n}, \citenamefont {Brunner}, \citenamefont {Gisin}, \citenamefont {Massar}, \citenamefont {Pironio},\ and\ \citenamefont {Scarani}}]{Acin2007PRL}%
  \BibitemOpen
  \bibfield  {author} {\bibinfo {author} {\bibfnamefont {A.}~\bibnamefont {Ac\'{\i}n}}, \bibinfo {author} {\bibfnamefont {N.}~\bibnamefont {Brunner}}, \bibinfo {author} {\bibfnamefont {N.}~\bibnamefont {Gisin}}, \bibinfo {author} {\bibfnamefont {S.}~\bibnamefont {Massar}}, \bibinfo {author} {\bibfnamefont {S.}~\bibnamefont {Pironio}},\ and\ \bibinfo {author} {\bibfnamefont {V.}~\bibnamefont {Scarani}},\ }\bibfield  {title} {\bibinfo {title} {Device-independent security of quantum cryptography against collective attacks},\ }\href {https://doi.org/10.1103/PhysRevLett.98.230501} {\bibfield  {journal} {\bibinfo  {journal} {Phys. Rev. Lett.}\ }\textbf {\bibinfo {volume} {98}},\ \bibinfo {pages} {230501} (\bibinfo {year} {2007})}\BibitemShut {NoStop}%
\bibitem [{\citenamefont {Uola}\ \emph {et~al.}(2020{\natexlab{a}})\citenamefont {Uola}, \citenamefont {Costa}, \citenamefont {Nguyen},\ and\ \citenamefont {G\"uhne}}]{UolaRMP2020}%
  \BibitemOpen
  \bibfield  {author} {\bibinfo {author} {\bibfnamefont {R.}~\bibnamefont {Uola}}, \bibinfo {author} {\bibfnamefont {A.~C.~S.}\ \bibnamefont {Costa}}, \bibinfo {author} {\bibfnamefont {H.~C.}\ \bibnamefont {Nguyen}},\ and\ \bibinfo {author} {\bibfnamefont {O.}~\bibnamefont {G\"uhne}},\ }\bibfield  {title} {\bibinfo {title} {Quantum steering},\ }\href {https://doi.org/10.1103/RevModPhys.92.015001} {\bibfield  {journal} {\bibinfo  {journal} {Rev. Mod. Phys.}\ }\textbf {\bibinfo {volume} {92}},\ \bibinfo {pages} {015001} (\bibinfo {year} {2020}{\natexlab{a}})}\BibitemShut {NoStop}%
\bibitem [{\citenamefont {Cavalcanti}\ and\ \citenamefont {Skrzypczyk}(2016)}]{Cavalcanti2016}%
  \BibitemOpen
  \bibfield  {author} {\bibinfo {author} {\bibfnamefont {D.}~\bibnamefont {Cavalcanti}}\ and\ \bibinfo {author} {\bibfnamefont {P.}~\bibnamefont {Skrzypczyk}},\ }\bibfield  {title} {\bibinfo {title} {Quantum steering: a review with focus on semidefinite programming},\ }\href {https://doi.org/10.1088/1361-6633/80/2/024001} {\bibfield  {journal} {\bibinfo  {journal} {Rep. Prog. Phys.}\ }\textbf {\bibinfo {volume} {80}},\ \bibinfo {pages} {024001} (\bibinfo {year} {2016})}\BibitemShut {NoStop}%
\bibitem [{\citenamefont {Branciard}\ \emph {et~al.}(2012)\citenamefont {Branciard}, \citenamefont {Cavalcanti}, \citenamefont {Walborn}, \citenamefont {Scarani},\ and\ \citenamefont {Wiseman}}]{Branciard2012PRA}%
  \BibitemOpen
  \bibfield  {author} {\bibinfo {author} {\bibfnamefont {C.}~\bibnamefont {Branciard}}, \bibinfo {author} {\bibfnamefont {E.~G.}\ \bibnamefont {Cavalcanti}}, \bibinfo {author} {\bibfnamefont {S.~P.}\ \bibnamefont {Walborn}}, \bibinfo {author} {\bibfnamefont {V.}~\bibnamefont {Scarani}},\ and\ \bibinfo {author} {\bibfnamefont {H.~M.}\ \bibnamefont {Wiseman}},\ }\bibfield  {title} {\bibinfo {title} {One-sided device-independent quantum key distribution: Security, feasibility, and the connection with steering},\ }\href {https://doi.org/10.1103/PhysRevA.85.010301} {\bibfield  {journal} {\bibinfo  {journal} {Phys. Rev. A}\ }\textbf {\bibinfo {volume} {85}},\ \bibinfo {pages} {010301(R)} (\bibinfo {year} {2012})}\BibitemShut {NoStop}%
\bibitem [{\citenamefont {Ku}\ \emph {et~al.}(2022{\natexlab{a}})\citenamefont {Ku}, \citenamefont {Hsieh}, \citenamefont {Chen}, \citenamefont {Chen},\ and\ \citenamefont {Budroni}}]{Ku2022NC}%
  \BibitemOpen
  \bibfield  {author} {\bibinfo {author} {\bibfnamefont {H.-Y.}\ \bibnamefont {Ku}}, \bibinfo {author} {\bibfnamefont {C.-Y.}\ \bibnamefont {Hsieh}}, \bibinfo {author} {\bibfnamefont {S.-L.}\ \bibnamefont {Chen}}, \bibinfo {author} {\bibfnamefont {Y.-N.}\ \bibnamefont {Chen}},\ and\ \bibinfo {author} {\bibfnamefont {C.}~\bibnamefont {Budroni}},\ }\bibfield  {title} {\bibinfo {title} {Complete classification of steerability under local filters and its relation with measurement incompatibility},\ }\href {https://doi.org/10.1038/s41467-022-32466-y} {\bibfield  {journal} {\bibinfo  {journal} {Nat. Commun.}\ }\textbf {\bibinfo {volume} {13}},\ \bibinfo {pages} {4973} (\bibinfo {year} {2022}{\natexlab{a}})}\BibitemShut {NoStop}%
\bibitem [{\citenamefont {Hsieh}\ \emph {et~al.}(2023{\natexlab{a}})\citenamefont {Hsieh}, \citenamefont {Ku},\ and\ \citenamefont {Budroni}}]{Hsieh2023-3}%
  \BibitemOpen
  \bibfield  {author} {\bibinfo {author} {\bibfnamefont {C.-Y.}\ \bibnamefont {Hsieh}}, \bibinfo {author} {\bibfnamefont {H.-Y.}\ \bibnamefont {Ku}},\ and\ \bibinfo {author} {\bibfnamefont {C.}~\bibnamefont {Budroni}},\ }\href@noop {} {\bibinfo {title} {Characterisation and fundamental limitations of irreversible stochastic steering distillation}} (\bibinfo {year} {2023}{\natexlab{a}}),\ \Eprint {https://arxiv.org/abs/2309.06191} {arXiv:2309.06191} \BibitemShut {NoStop}%
\bibitem [{\citenamefont {G\"uhne}\ \emph {et~al.}(2023)\citenamefont {G\"uhne}, \citenamefont {Haapasalo}, \citenamefont {Kraft}, \citenamefont {Pellonp\"a\"a},\ and\ \citenamefont {Uola}}]{Otfried2021Rev}%
  \BibitemOpen
  \bibfield  {author} {\bibinfo {author} {\bibfnamefont {O.}~\bibnamefont {G\"uhne}}, \bibinfo {author} {\bibfnamefont {E.}~\bibnamefont {Haapasalo}}, \bibinfo {author} {\bibfnamefont {T.}~\bibnamefont {Kraft}}, \bibinfo {author} {\bibfnamefont {J.-P.}\ \bibnamefont {Pellonp\"a\"a}},\ and\ \bibinfo {author} {\bibfnamefont {R.}~\bibnamefont {Uola}},\ }\bibfield  {title} {\bibinfo {title} {Colloquium: Incompatible measurements in quantum information science},\ }\href {https://doi.org/10.1103/RevModPhys.95.011003} {\bibfield  {journal} {\bibinfo  {journal} {Rev. Mod. Phys.}\ }\textbf {\bibinfo {volume} {95}},\ \bibinfo {pages} {011003} (\bibinfo {year} {2023})}\BibitemShut {NoStop}%
\bibitem [{\citenamefont {Ku}\ \emph {et~al.}(2023)\citenamefont {Ku}, \citenamefont {Hsieh},\ and\ \citenamefont {Budroni}}]{Ku2023}%
  \BibitemOpen
  \bibfield  {author} {\bibinfo {author} {\bibfnamefont {H.-Y.}\ \bibnamefont {Ku}}, \bibinfo {author} {\bibfnamefont {C.-Y.}\ \bibnamefont {Hsieh}},\ and\ \bibinfo {author} {\bibfnamefont {C.}~\bibnamefont {Budroni}},\ }\href@noop {} {\bibinfo {title} {Measurement incompatibility cannot be stochastically distilled}} (\bibinfo {year} {2023}),\ \Eprint {https://arxiv.org/abs/2308.02252} {arXiv:2308.02252} \BibitemShut {NoStop}%
\bibitem [{\citenamefont {Hsieh}\ \emph {et~al.}(2023{\natexlab{b}})\citenamefont {Hsieh}, \citenamefont {Uola},\ and\ \citenamefont {Skrzypczyk}}]{Hsieh2023}%
  \BibitemOpen
  \bibfield  {author} {\bibinfo {author} {\bibfnamefont {C.-Y.}\ \bibnamefont {Hsieh}}, \bibinfo {author} {\bibfnamefont {R.}~\bibnamefont {Uola}},\ and\ \bibinfo {author} {\bibfnamefont {P.}~\bibnamefont {Skrzypczyk}},\ }\href@noop {} {\bibinfo {title} {Quantum complementarity: A novel resource for unambiguous exclusion and encryption}} (\bibinfo {year} {2023}{\natexlab{b}}),\ \Eprint {https://arxiv.org/abs/2309.11968} {arXiv:2309.11968 [quant-ph]} \BibitemShut {NoStop}%
\bibitem [{\citenamefont {Rosset}\ \emph {et~al.}(2018)\citenamefont {Rosset}, \citenamefont {Buscemi},\ and\ \citenamefont {Liang}}]{Rosset2018PRX}%
  \BibitemOpen
  \bibfield  {author} {\bibinfo {author} {\bibfnamefont {D.}~\bibnamefont {Rosset}}, \bibinfo {author} {\bibfnamefont {F.}~\bibnamefont {Buscemi}},\ and\ \bibinfo {author} {\bibfnamefont {Y.-C.}\ \bibnamefont {Liang}},\ }\bibfield  {title} {\bibinfo {title} {Resource theory of quantum memories and their faithful verification with minimal assumptions},\ }\href {https://doi.org/10.1103/PhysRevX.8.021033} {\bibfield  {journal} {\bibinfo  {journal} {Phys. Rev. X}\ }\textbf {\bibinfo {volume} {8}},\ \bibinfo {pages} {021033} (\bibinfo {year} {2018})}\BibitemShut {NoStop}%
\bibitem [{\citenamefont {Yuan}\ \emph {et~al.}(2021)\citenamefont {Yuan}, \citenamefont {Liu}, \citenamefont {Zhao}, \citenamefont {Regula}, \citenamefont {Thompson},\ and\ \citenamefont {Gu}}]{Yuan2021npjQI}%
  \BibitemOpen
  \bibfield  {author} {\bibinfo {author} {\bibfnamefont {X.}~\bibnamefont {Yuan}}, \bibinfo {author} {\bibfnamefont {Y.}~\bibnamefont {Liu}}, \bibinfo {author} {\bibfnamefont {Q.}~\bibnamefont {Zhao}}, \bibinfo {author} {\bibfnamefont {B.}~\bibnamefont {Regula}}, \bibinfo {author} {\bibfnamefont {J.}~\bibnamefont {Thompson}},\ and\ \bibinfo {author} {\bibfnamefont {M.}~\bibnamefont {Gu}},\ }\bibfield  {title} {\bibinfo {title} {Universal and operational benchmarking of quantum memories},\ }\href {https://doi.org/10.1038/s41534-021-00444-9} {\bibfield  {journal} {\bibinfo  {journal} {npj Quantum Inf.}\ }\textbf {\bibinfo {volume} {7}},\ \bibinfo {pages} {108} (\bibinfo {year} {2021})}\BibitemShut {NoStop}%
\bibitem [{\citenamefont {Ku}\ \emph {et~al.}(2022{\natexlab{b}})\citenamefont {Ku}, \citenamefont {Kadlec}, \citenamefont {\ifmmode~\check{C}\else \v{C}\fi{}ernoch}, \citenamefont {Quintino}, \citenamefont {Zhou}, \citenamefont {Lemr}, \citenamefont {Lambert}, \citenamefont {Miranowicz}, \citenamefont {Chen}, \citenamefont {Nori},\ and\ \citenamefont {Chen}}]{Ku2022PRXQ}%
  \BibitemOpen
  \bibfield  {author} {\bibinfo {author} {\bibfnamefont {H.-Y.}\ \bibnamefont {Ku}}, \bibinfo {author} {\bibfnamefont {J.}~\bibnamefont {Kadlec}}, \bibinfo {author} {\bibfnamefont {A.}~\bibnamefont {\ifmmode~\check{C}\else \v{C}\fi{}ernoch}}, \bibinfo {author} {\bibfnamefont {M.~T.}\ \bibnamefont {Quintino}}, \bibinfo {author} {\bibfnamefont {W.}~\bibnamefont {Zhou}}, \bibinfo {author} {\bibfnamefont {K.}~\bibnamefont {Lemr}}, \bibinfo {author} {\bibfnamefont {N.}~\bibnamefont {Lambert}}, \bibinfo {author} {\bibfnamefont {A.}~\bibnamefont {Miranowicz}}, \bibinfo {author} {\bibfnamefont {S.-L.}\ \bibnamefont {Chen}}, \bibinfo {author} {\bibfnamefont {F.}~\bibnamefont {Nori}},\ and\ \bibinfo {author} {\bibfnamefont {Y.-N.}\ \bibnamefont {Chen}},\ }\bibfield  {title} {\bibinfo {title} {Quantifying quantumness of channels without entanglement},\ }\href {https://doi.org/10.1103/PRXQuantum.3.020338} {\bibfield  {journal} {\bibinfo  {journal} {PRX Quantum}\ }\textbf {\bibinfo {volume} {3}},\ \bibinfo {pages}
  {020338} (\bibinfo {year} {2022}{\natexlab{b}})}\BibitemShut {NoStop}%
\bibitem [{\citenamefont {Vieira}\ \emph {et~al.}(2024)\citenamefont {Vieira}, \citenamefont {Ku},\ and\ \citenamefont {Budroni}}]{Vieira2024}%
  \BibitemOpen
  \bibfield  {author} {\bibinfo {author} {\bibfnamefont {L.~B.}\ \bibnamefont {Vieira}}, \bibinfo {author} {\bibfnamefont {H.-Y.}\ \bibnamefont {Ku}},\ and\ \bibinfo {author} {\bibfnamefont {C.}~\bibnamefont {Budroni}},\ }\href@noop {} {\bibinfo {title} {Entanglement-breaking channels are a quantum memory resource}} (\bibinfo {year} {2024}),\ \Eprint {https://arxiv.org/abs/2402.16789} {arXiv:2402.16789} \BibitemShut {NoStop}%
\bibitem [{\citenamefont {Abiuso}(2023)}]{Abiuso2024}%
  \BibitemOpen
  \bibfield  {author} {\bibinfo {author} {\bibfnamefont {P.}~\bibnamefont {Abiuso}},\ }\bibfield  {title} {\bibinfo {title} {Verification of continuous-variable quantum memories},\ }\href {https://doi.org/10.1088/2058-9565/ad097c} {\bibfield  {journal} {\bibinfo  {journal} {Quantum Sci. Technol.}\ }\textbf {\bibinfo {volume} {9}},\ \bibinfo {pages} {01LT02} (\bibinfo {year} {2023})}\BibitemShut {NoStop}%
\bibitem [{\citenamefont {Narasimhachar}\ \emph {et~al.}(2019)\citenamefont {Narasimhachar}, \citenamefont {Thompson}, \citenamefont {Ma}, \citenamefont {Gour},\ and\ \citenamefont {Gu}}]{NarasimhacharPRL2019}%
  \BibitemOpen
  \bibfield  {author} {\bibinfo {author} {\bibfnamefont {V.}~\bibnamefont {Narasimhachar}}, \bibinfo {author} {\bibfnamefont {J.}~\bibnamefont {Thompson}}, \bibinfo {author} {\bibfnamefont {J.}~\bibnamefont {Ma}}, \bibinfo {author} {\bibfnamefont {G.}~\bibnamefont {Gour}},\ and\ \bibinfo {author} {\bibfnamefont {M.}~\bibnamefont {Gu}},\ }\bibfield  {title} {\bibinfo {title} {Quantifying memory capacity as a quantum thermodynamic resource},\ }\href {https://doi.org/10.1103/PhysRevLett.122.060601} {\bibfield  {journal} {\bibinfo  {journal} {Phys. Rev. Lett.}\ }\textbf {\bibinfo {volume} {122}},\ \bibinfo {pages} {060601} (\bibinfo {year} {2019})}\BibitemShut {NoStop}%
\bibitem [{\citenamefont {Hsieh}\ \emph {et~al.}(2025{\natexlab{a}})\citenamefont {Hsieh}, \citenamefont {Stratton}, \citenamefont {Wu},\ and\ \citenamefont {Ku}}]{Hsieh2025PRA-3}%
  \BibitemOpen
  \bibfield  {author} {\bibinfo {author} {\bibfnamefont {C.-Y.}\ \bibnamefont {Hsieh}}, \bibinfo {author} {\bibfnamefont {B.}~\bibnamefont {Stratton}}, \bibinfo {author} {\bibfnamefont {C.-H.}\ \bibnamefont {Wu}},\ and\ \bibinfo {author} {\bibfnamefont {H.-Y.}\ \bibnamefont {Ku}},\ }\bibfield  {title} {\bibinfo {title} {Dynamical resource theory of incompatibility preservability},\ }\href {https://doi.org/10.1103/PhysRevA.111.022422} {\bibfield  {journal} {\bibinfo  {journal} {Phys. Rev. A}\ }\textbf {\bibinfo {volume} {111}},\ \bibinfo {pages} {022422} (\bibinfo {year} {2025}{\natexlab{a}})}\BibitemShut {NoStop}%
\bibitem [{\citenamefont {Takagi}\ \emph {et~al.}(2020)\citenamefont {Takagi}, \citenamefont {Wang},\ and\ \citenamefont {Hayashi}}]{Takagi2020PRL}%
  \BibitemOpen
  \bibfield  {author} {\bibinfo {author} {\bibfnamefont {R.}~\bibnamefont {Takagi}}, \bibinfo {author} {\bibfnamefont {K.}~\bibnamefont {Wang}},\ and\ \bibinfo {author} {\bibfnamefont {M.}~\bibnamefont {Hayashi}},\ }\bibfield  {title} {\bibinfo {title} {Application of the resource theory of channels to communication scenarios},\ }\href {https://doi.org/10.1103/PhysRevLett.124.120502} {\bibfield  {journal} {\bibinfo  {journal} {Phys. Rev. Lett.}\ }\textbf {\bibinfo {volume} {124}},\ \bibinfo {pages} {120502} (\bibinfo {year} {2020})}\BibitemShut {NoStop}%
\bibitem [{\citenamefont {Hsieh}(2021)}]{Hsieh2021PRXQ}%
  \BibitemOpen
  \bibfield  {author} {\bibinfo {author} {\bibfnamefont {C.-Y.}\ \bibnamefont {Hsieh}},\ }\bibfield  {title} {\bibinfo {title} {Communication, dynamical resource theory, and thermodynamics},\ }\href {https://doi.org/10.1103/PRXQuantum.2.020318} {\bibfield  {journal} {\bibinfo  {journal} {PRX Quantum}\ }\textbf {\bibinfo {volume} {2}},\ \bibinfo {pages} {020318} (\bibinfo {year} {2021})}\BibitemShut {NoStop}%
\bibitem [{\citenamefont {Hsieh}(2025{\natexlab{a}})}]{Hsieh2025PRL}%
  \BibitemOpen
  \bibfield  {author} {\bibinfo {author} {\bibfnamefont {C.-Y.}\ \bibnamefont {Hsieh}},\ }\bibfield  {title} {\bibinfo {title} {Dynamical {L}andauer principle: Quantifying information transmission by thermodynamics},\ }\href {https://doi.org/10.1103/PhysRevLett.134.050404} {\bibfield  {journal} {\bibinfo  {journal} {Phys. Rev. Lett.}\ }\textbf {\bibinfo {volume} {134}},\ \bibinfo {pages} {050404} (\bibinfo {year} {2025}{\natexlab{a}})}\BibitemShut {NoStop}%
\bibitem [{\citenamefont {Hsieh}(2025{\natexlab{b}})}]{Hsieh2025PRA}%
  \BibitemOpen
  \bibfield  {author} {\bibinfo {author} {\bibfnamefont {C.-Y.}\ \bibnamefont {Hsieh}},\ }\bibfield  {title} {\bibinfo {title} {Dynamical {L}andauer principle: Thermodynamic criteria of transmitting classical information},\ }\href {https://doi.org/10.1103/PhysRevA.111.022207} {\bibfield  {journal} {\bibinfo  {journal} {Phys. Rev. A}\ }\textbf {\bibinfo {volume} {111}},\ \bibinfo {pages} {022207} (\bibinfo {year} {2025}{\natexlab{b}})}\BibitemShut {NoStop}%
\bibitem [{\citenamefont {Hsieh}(2020)}]{Hsieh2020}%
  \BibitemOpen
  \bibfield  {author} {\bibinfo {author} {\bibfnamefont {C.-Y.}\ \bibnamefont {Hsieh}},\ }\bibfield  {title} {\bibinfo {title} {Resource preservability},\ }\href {https://doi.org/10.22331/q-2020-03-19-244} {\bibfield  {journal} {\bibinfo  {journal} {{Quantum}}\ }\textbf {\bibinfo {volume} {4}},\ \bibinfo {pages} {244} (\bibinfo {year} {2020})}\BibitemShut {NoStop}%
\bibitem [{\citenamefont {Liu}\ and\ \citenamefont {Winter}(2019)}]{Liu2019DRT}%
  \BibitemOpen
  \bibfield  {author} {\bibinfo {author} {\bibfnamefont {Z.-W.}\ \bibnamefont {Liu}}\ and\ \bibinfo {author} {\bibfnamefont {A.}~\bibnamefont {Winter}},\ }\href@noop {} {\bibinfo {title} {Resource theories of quantum channels and the universal role of resource erasure}} (\bibinfo {year} {2019}),\ \Eprint {https://arxiv.org/abs/1904.04201} {arXiv:1904.04201 [quant-ph]} \BibitemShut {NoStop}%
\bibitem [{\citenamefont {Liu}\ and\ \citenamefont {Yuan}(2020)}]{Liu2020PRR}%
  \BibitemOpen
  \bibfield  {author} {\bibinfo {author} {\bibfnamefont {Y.}~\bibnamefont {Liu}}\ and\ \bibinfo {author} {\bibfnamefont {X.}~\bibnamefont {Yuan}},\ }\bibfield  {title} {\bibinfo {title} {Operational resource theory of quantum channels},\ }\href {https://doi.org/10.1103/PhysRevResearch.2.012035} {\bibfield  {journal} {\bibinfo  {journal} {Phys. Rev. Res.}\ }\textbf {\bibinfo {volume} {2}},\ \bibinfo {pages} {012035} (\bibinfo {year} {2020})}\BibitemShut {NoStop}%
\bibitem [{\citenamefont {Hsieh}\ \emph {et~al.}(2020)\citenamefont {Hsieh}, \citenamefont {Lostaglio},\ and\ \citenamefont {Ac\'{\i}n}}]{Hsieh2020PRR}%
  \BibitemOpen
  \bibfield  {author} {\bibinfo {author} {\bibfnamefont {C.-Y.}\ \bibnamefont {Hsieh}}, \bibinfo {author} {\bibfnamefont {M.}~\bibnamefont {Lostaglio}},\ and\ \bibinfo {author} {\bibfnamefont {A.}~\bibnamefont {Ac\'{\i}n}},\ }\bibfield  {title} {\bibinfo {title} {Entanglement preserving local thermalization},\ }\href {https://doi.org/10.1103/PhysRevResearch.2.013379} {\bibfield  {journal} {\bibinfo  {journal} {Phys. Rev. Res.}\ }\textbf {\bibinfo {volume} {2}},\ \bibinfo {pages} {013379} (\bibinfo {year} {2020})}\BibitemShut {NoStop}%
\bibitem [{\citenamefont {Streltsov}\ \emph {et~al.}(2015)\citenamefont {Streltsov}, \citenamefont {Singh}, \citenamefont {Dhar}, \citenamefont {Bera},\ and\ \citenamefont {Adesso}}]{Streltsov2015PRL}%
  \BibitemOpen
  \bibfield  {author} {\bibinfo {author} {\bibfnamefont {A.}~\bibnamefont {Streltsov}}, \bibinfo {author} {\bibfnamefont {U.}~\bibnamefont {Singh}}, \bibinfo {author} {\bibfnamefont {H.~S.}\ \bibnamefont {Dhar}}, \bibinfo {author} {\bibfnamefont {M.~N.}\ \bibnamefont {Bera}},\ and\ \bibinfo {author} {\bibfnamefont {G.}~\bibnamefont {Adesso}},\ }\bibfield  {title} {\bibinfo {title} {Measuring quantum coherence with entanglement},\ }\href {https://doi.org/10.1103/PhysRevLett.115.020403} {\bibfield  {journal} {\bibinfo  {journal} {Phys. Rev. Lett.}\ }\textbf {\bibinfo {volume} {115}},\ \bibinfo {pages} {020403} (\bibinfo {year} {2015})}\BibitemShut {NoStop}%
\bibitem [{\citenamefont {Stratton}\ \emph {et~al.}(2024)\citenamefont {Stratton}, \citenamefont {Hsieh},\ and\ \citenamefont {Skrzypczyk}}]{Stratton2023}%
  \BibitemOpen
  \bibfield  {author} {\bibinfo {author} {\bibfnamefont {B.}~\bibnamefont {Stratton}}, \bibinfo {author} {\bibfnamefont {C.-Y.}\ \bibnamefont {Hsieh}},\ and\ \bibinfo {author} {\bibfnamefont {P.}~\bibnamefont {Skrzypczyk}},\ }\bibfield  {title} {\bibinfo {title} {Dynamical resource theory of informational nonequilibrium preservability},\ }\href {https://doi.org/10.1103/PhysRevLett.132.110202} {\bibfield  {journal} {\bibinfo  {journal} {Phys. Rev. Lett.}\ }\textbf {\bibinfo {volume} {132}},\ \bibinfo {pages} {110202} (\bibinfo {year} {2024})}\BibitemShut {NoStop}%
\bibitem [{\citenamefont {Chitambar}\ and\ \citenamefont {Gour}(2019)}]{ChitambarRMP2019}%
  \BibitemOpen
  \bibfield  {author} {\bibinfo {author} {\bibfnamefont {E.}~\bibnamefont {Chitambar}}\ and\ \bibinfo {author} {\bibfnamefont {G.}~\bibnamefont {Gour}},\ }\bibfield  {title} {\bibinfo {title} {Quantum resource theories},\ }\href {https://doi.org/10.1103/RevModPhys.91.025001} {\bibfield  {journal} {\bibinfo  {journal} {Rev. Mod. Phys.}\ }\textbf {\bibinfo {volume} {91}},\ \bibinfo {pages} {025001} (\bibinfo {year} {2019})}\BibitemShut {NoStop}%
\bibitem [{\citenamefont {Hsieh}\ and\ \citenamefont {Gessner}()}]{Companion-letter}%
  \BibitemOpen
  \bibfield  {author} {\bibinfo {author} {\bibfnamefont {C.-Y.}\ \bibnamefont {Hsieh}}\ and\ \bibinfo {author} {\bibfnamefont {M.}~\bibnamefont {Gessner}},\ }\href@noop {} {\bibinfo {title} {Complete characterisation of state conversions by work extraction, in preparation}}\BibitemShut {NoStop}%
\bibitem [{\citenamefont {Takagi}\ and\ \citenamefont {Regula}(2019)}]{Takagi2019}%
  \BibitemOpen
  \bibfield  {author} {\bibinfo {author} {\bibfnamefont {R.}~\bibnamefont {Takagi}}\ and\ \bibinfo {author} {\bibfnamefont {B.}~\bibnamefont {Regula}},\ }\bibfield  {title} {\bibinfo {title} {General resource theories in quantum mechanics and beyond: Operational characterization via discrimination tasks},\ }\href {https://doi.org/10.1103/PhysRevX.9.031053} {\bibfield  {journal} {\bibinfo  {journal} {Phys. Rev. X}\ }\textbf {\bibinfo {volume} {9}},\ \bibinfo {pages} {031053} (\bibinfo {year} {2019})}\BibitemShut {NoStop}%
\bibitem [{\citenamefont {Skrzypczyk}\ \emph {et~al.}(2019)\citenamefont {Skrzypczyk}, \citenamefont {\ifmmode \check{S}\else \v{S}\fi{}upi\ifmmode~\acute{c}\else \'{c}\fi{}},\ and\ \citenamefont {Cavalcanti}}]{Skrzypczyk2019}%
  \BibitemOpen
  \bibfield  {author} {\bibinfo {author} {\bibfnamefont {P.}~\bibnamefont {Skrzypczyk}}, \bibinfo {author} {\bibfnamefont {I.}~\bibnamefont {\ifmmode \check{S}\else \v{S}\fi{}upi\ifmmode~\acute{c}\else \'{c}\fi{}}},\ and\ \bibinfo {author} {\bibfnamefont {D.}~\bibnamefont {Cavalcanti}},\ }\bibfield  {title} {\bibinfo {title} {All sets of incompatible measurements give an advantage in quantum state discrimination},\ }\href {https://doi.org/10.1103/PhysRevLett.122.130403} {\bibfield  {journal} {\bibinfo  {journal} {Phys. Rev. Lett.}\ }\textbf {\bibinfo {volume} {122}},\ \bibinfo {pages} {130403} (\bibinfo {year} {2019})}\BibitemShut {NoStop}%
\bibitem [{\citenamefont {Hsieh}\ \emph {et~al.}(2024)\citenamefont {Hsieh}, \citenamefont {Tabia}, \citenamefont {Yin},\ and\ \citenamefont {Liang}}]{Hsieh2023-2}%
  \BibitemOpen
  \bibfield  {author} {\bibinfo {author} {\bibfnamefont {C.-Y.}\ \bibnamefont {Hsieh}}, \bibinfo {author} {\bibfnamefont {G.~N.~M.}\ \bibnamefont {Tabia}}, \bibinfo {author} {\bibfnamefont {Y.-C.}\ \bibnamefont {Yin}},\ and\ \bibinfo {author} {\bibfnamefont {Y.-C.}\ \bibnamefont {Liang}},\ }\bibfield  {title} {\bibinfo {title} {Resource marginal problems},\ }\href {https://doi.org/10.22331/q-2024-05-22-1353} {\bibfield  {journal} {\bibinfo  {journal} {{Quantum}}\ }\textbf {\bibinfo {volume} {8}},\ \bibinfo {pages} {1353} (\bibinfo {year} {2024})}\BibitemShut {NoStop}%
\bibitem [{\citenamefont {Hsieh}\ \emph {et~al.}(2022)\citenamefont {Hsieh}, \citenamefont {Lostaglio},\ and\ \citenamefont {Ac\'{\i}n}}]{Hsieh2022PRR}%
  \BibitemOpen
  \bibfield  {author} {\bibinfo {author} {\bibfnamefont {C.-Y.}\ \bibnamefont {Hsieh}}, \bibinfo {author} {\bibfnamefont {M.}~\bibnamefont {Lostaglio}},\ and\ \bibinfo {author} {\bibfnamefont {A.}~\bibnamefont {Ac\'{\i}n}},\ }\bibfield  {title} {\bibinfo {title} {Quantum channel marginal problem},\ }\href {https://doi.org/10.1103/PhysRevResearch.4.013249} {\bibfield  {journal} {\bibinfo  {journal} {Phys. Rev. Res.}\ }\textbf {\bibinfo {volume} {4}},\ \bibinfo {pages} {013249} (\bibinfo {year} {2022})}\BibitemShut {NoStop}%
\bibitem [{\citenamefont {Ducuara}\ and\ \citenamefont {Skrzypczyk}(2020)}]{Ducuara2020PRL}%
  \BibitemOpen
  \bibfield  {author} {\bibinfo {author} {\bibfnamefont {A.~F.}\ \bibnamefont {Ducuara}}\ and\ \bibinfo {author} {\bibfnamefont {P.}~\bibnamefont {Skrzypczyk}},\ }\bibfield  {title} {\bibinfo {title} {Operational interpretation of weight-based resource quantifiers in convex quantum resource theories},\ }\href {https://doi.org/10.1103/PhysRevLett.125.110401} {\bibfield  {journal} {\bibinfo  {journal} {Phys. Rev. Lett.}\ }\textbf {\bibinfo {volume} {125}},\ \bibinfo {pages} {110401} (\bibinfo {year} {2020})}\BibitemShut {NoStop}%
\bibitem [{\citenamefont {Uola}\ \emph {et~al.}(2020{\natexlab{b}})\citenamefont {Uola}, \citenamefont {Bullock}, \citenamefont {Kraft}, \citenamefont {Pellonp\"a\"a},\ and\ \citenamefont {Brunner}}]{Uola2020PRL}%
  \BibitemOpen
  \bibfield  {author} {\bibinfo {author} {\bibfnamefont {R.}~\bibnamefont {Uola}}, \bibinfo {author} {\bibfnamefont {T.}~\bibnamefont {Bullock}}, \bibinfo {author} {\bibfnamefont {T.}~\bibnamefont {Kraft}}, \bibinfo {author} {\bibfnamefont {J.-P.}\ \bibnamefont {Pellonp\"a\"a}},\ and\ \bibinfo {author} {\bibfnamefont {N.}~\bibnamefont {Brunner}},\ }\bibfield  {title} {\bibinfo {title} {All quantum resources provide an advantage in exclusion tasks},\ }\href {https://doi.org/10.1103/PhysRevLett.125.110402} {\bibfield  {journal} {\bibinfo  {journal} {Phys. Rev. Lett.}\ }\textbf {\bibinfo {volume} {125}},\ \bibinfo {pages} {110402} (\bibinfo {year} {2020}{\natexlab{b}})}\BibitemShut {NoStop}%
\bibitem [{\citenamefont {Tan}\ \emph {et~al.}(2021)\citenamefont {Tan}, \citenamefont {Narasimhachar},\ and\ \citenamefont {Regula}}]{Tan2021PRL}%
  \BibitemOpen
  \bibfield  {author} {\bibinfo {author} {\bibfnamefont {K.~C.}\ \bibnamefont {Tan}}, \bibinfo {author} {\bibfnamefont {V.}~\bibnamefont {Narasimhachar}},\ and\ \bibinfo {author} {\bibfnamefont {B.}~\bibnamefont {Regula}},\ }\bibfield  {title} {\bibinfo {title} {Fisher information universally identifies quantum resources},\ }\href {https://doi.org/10.1103/PhysRevLett.127.200402} {\bibfield  {journal} {\bibinfo  {journal} {Phys. Rev. Lett.}\ }\textbf {\bibinfo {volume} {127}},\ \bibinfo {pages} {200402} (\bibinfo {year} {2021})}\BibitemShut {NoStop}%
\bibitem [{\citenamefont {Uola}\ \emph {et~al.}(2020{\natexlab{c}})\citenamefont {Uola}, \citenamefont {Kraft},\ and\ \citenamefont {Abbott}}]{Uola2020PRA}%
  \BibitemOpen
  \bibfield  {author} {\bibinfo {author} {\bibfnamefont {R.}~\bibnamefont {Uola}}, \bibinfo {author} {\bibfnamefont {T.}~\bibnamefont {Kraft}},\ and\ \bibinfo {author} {\bibfnamefont {A.~A.}\ \bibnamefont {Abbott}},\ }\bibfield  {title} {\bibinfo {title} {Quantification of quantum dynamics with input-output games},\ }\href {https://doi.org/10.1103/PhysRevA.101.052306} {\bibfield  {journal} {\bibinfo  {journal} {Phys. Rev. A}\ }\textbf {\bibinfo {volume} {101}},\ \bibinfo {pages} {052306} (\bibinfo {year} {2020}{\natexlab{c}})}\BibitemShut {NoStop}%
\bibitem [{\citenamefont {Kosloff}\ and\ \citenamefont {Feldmann}(2002)}]{KosloffPRE2002}%
  \BibitemOpen
  \bibfield  {author} {\bibinfo {author} {\bibfnamefont {R.}~\bibnamefont {Kosloff}}\ and\ \bibinfo {author} {\bibfnamefont {T.}~\bibnamefont {Feldmann}},\ }\bibfield  {title} {\bibinfo {title} {Discrete four-stroke quantum heat engine exploring the origin of friction},\ }\href {https://doi.org/10.1103/PhysRevE.65.055102} {\bibfield  {journal} {\bibinfo  {journal} {Phys. Rev. E}\ }\textbf {\bibinfo {volume} {65}},\ \bibinfo {pages} {055102} (\bibinfo {year} {2002})}\BibitemShut {NoStop}%
\bibitem [{\citenamefont {Feldmann}\ and\ \citenamefont {Kosloff}(2006)}]{FeldmannPRE2006}%
  \BibitemOpen
  \bibfield  {author} {\bibinfo {author} {\bibfnamefont {T.}~\bibnamefont {Feldmann}}\ and\ \bibinfo {author} {\bibfnamefont {R.}~\bibnamefont {Kosloff}},\ }\bibfield  {title} {\bibinfo {title} {Quantum lubrication: Suppression of friction in a first-principles four-stroke heat engine},\ }\href {https://doi.org/10.1103/PhysRevE.73.025107} {\bibfield  {journal} {\bibinfo  {journal} {Phys. Rev. E}\ }\textbf {\bibinfo {volume} {73}},\ \bibinfo {pages} {025107} (\bibinfo {year} {2006})}\BibitemShut {NoStop}%
\bibitem [{\citenamefont {Ji}\ \emph {et~al.}(2022)\citenamefont {Ji}, \citenamefont {Chai}, \citenamefont {Wang}, \citenamefont {Guo}, \citenamefont {Rong}, \citenamefont {Shi}, \citenamefont {Ren}, \citenamefont {Wang},\ and\ \citenamefont {Du}}]{JiPRL2022}%
  \BibitemOpen
  \bibfield  {author} {\bibinfo {author} {\bibfnamefont {W.}~\bibnamefont {Ji}}, \bibinfo {author} {\bibfnamefont {Z.}~\bibnamefont {Chai}}, \bibinfo {author} {\bibfnamefont {M.}~\bibnamefont {Wang}}, \bibinfo {author} {\bibfnamefont {Y.}~\bibnamefont {Guo}}, \bibinfo {author} {\bibfnamefont {X.}~\bibnamefont {Rong}}, \bibinfo {author} {\bibfnamefont {F.}~\bibnamefont {Shi}}, \bibinfo {author} {\bibfnamefont {C.}~\bibnamefont {Ren}}, \bibinfo {author} {\bibfnamefont {Y.}~\bibnamefont {Wang}},\ and\ \bibinfo {author} {\bibfnamefont {J.}~\bibnamefont {Du}},\ }\bibfield  {title} {\bibinfo {title} {Spin quantum heat engine quantified by quantum steering},\ }\href {https://doi.org/10.1103/PhysRevLett.128.090602} {\bibfield  {journal} {\bibinfo  {journal} {Phys. Rev. Lett.}\ }\textbf {\bibinfo {volume} {128}},\ \bibinfo {pages} {090602} (\bibinfo {year} {2022})}\BibitemShut {NoStop}%
\bibitem [{\citenamefont {Beyer}\ \emph {et~al.}(2019)\citenamefont {Beyer}, \citenamefont {Luoma},\ and\ \citenamefont {Strunz}}]{BeyerPRL2019}%
  \BibitemOpen
  \bibfield  {author} {\bibinfo {author} {\bibfnamefont {K.}~\bibnamefont {Beyer}}, \bibinfo {author} {\bibfnamefont {K.}~\bibnamefont {Luoma}},\ and\ \bibinfo {author} {\bibfnamefont {W.~T.}\ \bibnamefont {Strunz}},\ }\bibfield  {title} {\bibinfo {title} {Steering heat engines: A truly quantum maxwell demon},\ }\href {https://doi.org/10.1103/PhysRevLett.123.250606} {\bibfield  {journal} {\bibinfo  {journal} {Phys. Rev. Lett.}\ }\textbf {\bibinfo {volume} {123}},\ \bibinfo {pages} {250606} (\bibinfo {year} {2019})}\BibitemShut {NoStop}%
\bibitem [{\citenamefont {Chan}\ \emph {et~al.}(2022)\citenamefont {Chan}, \citenamefont {Huang}, \citenamefont {Lin}, \citenamefont {Ku}, \citenamefont {Chen}, \citenamefont {Chen},\ and\ \citenamefont {Chen}}]{ChanPRA2022}%
  \BibitemOpen
  \bibfield  {author} {\bibinfo {author} {\bibfnamefont {F.-J.}\ \bibnamefont {Chan}}, \bibinfo {author} {\bibfnamefont {Y.-T.}\ \bibnamefont {Huang}}, \bibinfo {author} {\bibfnamefont {J.-D.}\ \bibnamefont {Lin}}, \bibinfo {author} {\bibfnamefont {H.-Y.}\ \bibnamefont {Ku}}, \bibinfo {author} {\bibfnamefont {J.-S.}\ \bibnamefont {Chen}}, \bibinfo {author} {\bibfnamefont {H.-B.}\ \bibnamefont {Chen}},\ and\ \bibinfo {author} {\bibfnamefont {Y.-N.}\ \bibnamefont {Chen}},\ }\bibfield  {title} {\bibinfo {title} {Maxwell's two-demon engine under pure dephasing noise},\ }\href {https://doi.org/10.1103/PhysRevA.106.052201} {\bibfield  {journal} {\bibinfo  {journal} {Phys. Rev. A}\ }\textbf {\bibinfo {volume} {106}},\ \bibinfo {pages} {052201} (\bibinfo {year} {2022})}\BibitemShut {NoStop}%
\bibitem [{\citenamefont {Biswas}\ \emph {et~al.}(2025)\citenamefont {Biswas}, \citenamefont {Datta},\ and\ \citenamefont {Garc\'{\i}a-Pintos}}]{Biswas2025PRL}%
  \BibitemOpen
  \bibfield  {author} {\bibinfo {author} {\bibfnamefont {T.}~\bibnamefont {Biswas}}, \bibinfo {author} {\bibfnamefont {C.}~\bibnamefont {Datta}},\ and\ \bibinfo {author} {\bibfnamefont {L.~P.}\ \bibnamefont {Garc\'{\i}a-Pintos}},\ }\bibfield  {title} {\bibinfo {title} {Quantum thermodynamic advantage in work extraction from steerable quantum correlations},\ }\href {https://doi.org/10.1103/9qcc-7lq5} {\bibfield  {journal} {\bibinfo  {journal} {Phys. Rev. Lett.}\ }\textbf {\bibinfo {volume} {135}},\ \bibinfo {pages} {110402} (\bibinfo {year} {2025})}\BibitemShut {NoStop}%
\bibitem [{\citenamefont {Hsieh}\ and\ \citenamefont {Chen}(2024)}]{Hsieh2024}%
  \BibitemOpen
  \bibfield  {author} {\bibinfo {author} {\bibfnamefont {C.-Y.}\ \bibnamefont {Hsieh}}\ and\ \bibinfo {author} {\bibfnamefont {S.-L.}\ \bibnamefont {Chen}},\ }\bibfield  {title} {\bibinfo {title} {Thermodynamic approach to quantifying incompatible instruments},\ }\href {https://doi.org/10.1103/PhysRevLett.133.170401} {\bibfield  {journal} {\bibinfo  {journal} {Phys. Rev. Lett.}\ }\textbf {\bibinfo {volume} {133}},\ \bibinfo {pages} {170401} (\bibinfo {year} {2024})}\BibitemShut {NoStop}%
\bibitem [{\citenamefont {Lostaglio}\ \emph {et~al.}(2017)\citenamefont {Lostaglio}, \citenamefont {Jennings},\ and\ \citenamefont {Rudolph}}]{LostaglioNJP2017}%
  \BibitemOpen
  \bibfield  {author} {\bibinfo {author} {\bibfnamefont {M.}~\bibnamefont {Lostaglio}}, \bibinfo {author} {\bibfnamefont {D.}~\bibnamefont {Jennings}},\ and\ \bibinfo {author} {\bibfnamefont {T.}~\bibnamefont {Rudolph}},\ }\bibfield  {title} {\bibinfo {title} {Thermodynamic resource theories, non-commutativity and maximum entropy principles},\ }\href {https://doi.org/10.1088/1367-2630/aa617f} {\bibfield  {journal} {\bibinfo  {journal} {New J. Phys.}\ }\textbf {\bibinfo {volume} {19}},\ \bibinfo {pages} {043008} (\bibinfo {year} {2017})}\BibitemShut {NoStop}%
\bibitem [{\citenamefont {Majidy}\ \emph {et~al.}(2023)\citenamefont {Majidy}, \citenamefont {Braasch}, \citenamefont {Lasek}, \citenamefont {Upadhyaya}, \citenamefont {Kalev},\ and\ \citenamefont {Yunger~Halpern}}]{Majidy2023}%
  \BibitemOpen
  \bibfield  {author} {\bibinfo {author} {\bibfnamefont {S.}~\bibnamefont {Majidy}}, \bibinfo {author} {\bibfnamefont {W.~F.}\ \bibnamefont {Braasch}}, \bibinfo {author} {\bibfnamefont {A.}~\bibnamefont {Lasek}}, \bibinfo {author} {\bibfnamefont {T.}~\bibnamefont {Upadhyaya}}, \bibinfo {author} {\bibfnamefont {A.}~\bibnamefont {Kalev}},\ and\ \bibinfo {author} {\bibfnamefont {N.}~\bibnamefont {Yunger~Halpern}},\ }\bibfield  {title} {\bibinfo {title} {Noncommuting conserved charges in quantum thermodynamics and beyond},\ }\href {https://doi.org/10.1038/s42254-023-00641-9} {\bibfield  {journal} {\bibinfo  {journal} {Nat. Rev. Phys.}\ }\textbf {\bibinfo {volume} {5}},\ \bibinfo {pages} {689} (\bibinfo {year} {2023})}\BibitemShut {NoStop}%
\bibitem [{\citenamefont {Yunger~Halpern}\ \emph {et~al.}(2016)\citenamefont {Yunger~Halpern}, \citenamefont {Faist}, \citenamefont {Oppenheim},\ and\ \citenamefont {Winter}}]{YungerHalpern2016NC}%
  \BibitemOpen
  \bibfield  {author} {\bibinfo {author} {\bibfnamefont {N.}~\bibnamefont {Yunger~Halpern}}, \bibinfo {author} {\bibfnamefont {P.}~\bibnamefont {Faist}}, \bibinfo {author} {\bibfnamefont {J.}~\bibnamefont {Oppenheim}},\ and\ \bibinfo {author} {\bibfnamefont {A.}~\bibnamefont {Winter}},\ }\bibfield  {title} {\bibinfo {title} {Microcanonical and resource-theoretic derivations of the thermal state of a quantum system with noncommuting charges},\ }\href {https://doi.org/10.1038/ncomms12051} {\bibfield  {journal} {\bibinfo  {journal} {Nat. Commun.}\ }\textbf {\bibinfo {volume} {7}},\ \bibinfo {pages} {12051} (\bibinfo {year} {2016})}\BibitemShut {NoStop}%
\bibitem [{\citenamefont {Guryanova}\ \emph {et~al.}(2016)\citenamefont {Guryanova}, \citenamefont {Popescu}, \citenamefont {Short}, \citenamefont {Silva},\ and\ \citenamefont {Skrzypczyk}}]{Guryanova2016NC}%
  \BibitemOpen
  \bibfield  {author} {\bibinfo {author} {\bibfnamefont {Y.}~\bibnamefont {Guryanova}}, \bibinfo {author} {\bibfnamefont {S.}~\bibnamefont {Popescu}}, \bibinfo {author} {\bibfnamefont {A.~J.}\ \bibnamefont {Short}}, \bibinfo {author} {\bibfnamefont {R.}~\bibnamefont {Silva}},\ and\ \bibinfo {author} {\bibfnamefont {P.}~\bibnamefont {Skrzypczyk}},\ }\bibfield  {title} {\bibinfo {title} {Thermodynamics of quantum systems with multiple conserved quantities},\ }\href {https://doi.org/10.1038/ncomms12049} {\bibfield  {journal} {\bibinfo  {journal} {Nat. Commun.}\ }\textbf {\bibinfo {volume} {7}},\ \bibinfo {pages} {12049} (\bibinfo {year} {2016})}\BibitemShut {NoStop}%
\bibitem [{\citenamefont {Lostaglio}(2020)}]{Lostaglio2020PRL}%
  \BibitemOpen
  \bibfield  {author} {\bibinfo {author} {\bibfnamefont {M.}~\bibnamefont {Lostaglio}},\ }\bibfield  {title} {\bibinfo {title} {Certifying quantum signatures in thermodynamics and metrology via contextuality of quantum linear response},\ }\href {https://doi.org/10.1103/PhysRevLett.125.230603} {\bibfield  {journal} {\bibinfo  {journal} {Phys. Rev. Lett.}\ }\textbf {\bibinfo {volume} {125}},\ \bibinfo {pages} {230603} (\bibinfo {year} {2020})}\BibitemShut {NoStop}%
\bibitem [{\citenamefont {Levy}\ and\ \citenamefont {Lostaglio}(2020)}]{Levy2020PRXQ}%
  \BibitemOpen
  \bibfield  {author} {\bibinfo {author} {\bibfnamefont {A.}~\bibnamefont {Levy}}\ and\ \bibinfo {author} {\bibfnamefont {M.}~\bibnamefont {Lostaglio}},\ }\bibfield  {title} {\bibinfo {title} {Quasiprobability distribution for heat fluctuations in the quantum regime},\ }\href {https://doi.org/10.1103/PRXQuantum.1.010309} {\bibfield  {journal} {\bibinfo  {journal} {PRX Quantum}\ }\textbf {\bibinfo {volume} {1}},\ \bibinfo {pages} {010309} (\bibinfo {year} {2020})}\BibitemShut {NoStop}%
\bibitem [{\citenamefont {Puliyil}\ \emph {et~al.}(2022)\citenamefont {Puliyil}, \citenamefont {Banik},\ and\ \citenamefont {Alimuddin}}]{Puliyil2022PRL}%
  \BibitemOpen
  \bibfield  {author} {\bibinfo {author} {\bibfnamefont {S.}~\bibnamefont {Puliyil}}, \bibinfo {author} {\bibfnamefont {M.}~\bibnamefont {Banik}},\ and\ \bibinfo {author} {\bibfnamefont {M.}~\bibnamefont {Alimuddin}},\ }\bibfield  {title} {\bibinfo {title} {Thermodynamic signatures of genuinely multipartite entanglement},\ }\href {https://doi.org/10.1103/PhysRevLett.129.070601} {\bibfield  {journal} {\bibinfo  {journal} {Phys. Rev. Lett.}\ }\textbf {\bibinfo {volume} {129}},\ \bibinfo {pages} {070601} (\bibinfo {year} {2022})}\BibitemShut {NoStop}%
\bibitem [{\citenamefont {Upadhyaya}\ \emph {et~al.}(2024)\citenamefont {Upadhyaya}, \citenamefont {Braasch}, \citenamefont {Landi},\ and\ \citenamefont {Halpern}}]{Upadhyaya2023}%
  \BibitemOpen
  \bibfield  {author} {\bibinfo {author} {\bibfnamefont {T.}~\bibnamefont {Upadhyaya}}, \bibinfo {author} {\bibfnamefont {W.~F.}\ \bibnamefont {Braasch}}, \bibinfo {author} {\bibfnamefont {G.~T.}\ \bibnamefont {Landi}},\ and\ \bibinfo {author} {\bibfnamefont {N.~Y.}\ \bibnamefont {Halpern}},\ }\bibfield  {title} {\bibinfo {title} {Non-abelian transport distinguishes three usually equivalent notions of entropy production},\ }\href {https://doi.org/10.1103/PRXQuantum.5.030355} {\bibfield  {journal} {\bibinfo  {journal} {PRX Quantum}\ }\textbf {\bibinfo {volume} {5}},\ \bibinfo {pages} {030355} (\bibinfo {year} {2024})}\BibitemShut {NoStop}%
\bibitem [{\citenamefont {Centrone}\ and\ \citenamefont {Gessner}(2024)}]{Centrone2024}%
  \BibitemOpen
  \bibfield  {author} {\bibinfo {author} {\bibfnamefont {F.}~\bibnamefont {Centrone}}\ and\ \bibinfo {author} {\bibfnamefont {M.}~\bibnamefont {Gessner}},\ }\bibfield  {title} {\bibinfo {title} {Breaking local quantum speed limits with steering},\ }\href {https://doi.org/10.1103/PhysRevResearch.6.L042067} {\bibfield  {journal} {\bibinfo  {journal} {Phys. Rev. Res.}\ }\textbf {\bibinfo {volume} {6}},\ \bibinfo {pages} {L042067} (\bibinfo {year} {2024})}\BibitemShut {NoStop}%
\bibitem [{\citenamefont {Lipka-Bartosik}\ \emph {et~al.}(2024)\citenamefont {Lipka-Bartosik}, \citenamefont {Diotallevi},\ and\ \citenamefont {Bakhshinezhad}}]{Lipkabartosik2023}%
  \BibitemOpen
  \bibfield  {author} {\bibinfo {author} {\bibfnamefont {P.}~\bibnamefont {Lipka-Bartosik}}, \bibinfo {author} {\bibfnamefont {G.~F.}\ \bibnamefont {Diotallevi}},\ and\ \bibinfo {author} {\bibfnamefont {P.}~\bibnamefont {Bakhshinezhad}},\ }\bibfield  {title} {\bibinfo {title} {Fundamental limits on anomalous energy flows in correlated quantum systems},\ }\href {https://doi.org/10.1103/PhysRevLett.132.140402} {\bibfield  {journal} {\bibinfo  {journal} {Phys. Rev. Lett.}\ }\textbf {\bibinfo {volume} {132}},\ \bibinfo {pages} {140402} (\bibinfo {year} {2024})}\BibitemShut {NoStop}%
\bibitem [{\citenamefont {Perarnau-Llobet}\ \emph {et~al.}(2015)\citenamefont {Perarnau-Llobet}, \citenamefont {Hovhannisyan}, \citenamefont {Huber}, \citenamefont {Skrzypczyk}, \citenamefont {Brunner},\ and\ \citenamefont {Ac\'{\i}n}}]{Perarnau-LlobetPRX2015}%
  \BibitemOpen
  \bibfield  {author} {\bibinfo {author} {\bibfnamefont {M.}~\bibnamefont {Perarnau-Llobet}}, \bibinfo {author} {\bibfnamefont {K.~V.}\ \bibnamefont {Hovhannisyan}}, \bibinfo {author} {\bibfnamefont {M.}~\bibnamefont {Huber}}, \bibinfo {author} {\bibfnamefont {P.}~\bibnamefont {Skrzypczyk}}, \bibinfo {author} {\bibfnamefont {N.}~\bibnamefont {Brunner}},\ and\ \bibinfo {author} {\bibfnamefont {A.}~\bibnamefont {Ac\'{\i}n}},\ }\bibfield  {title} {\bibinfo {title} {Extractable work from correlations},\ }\href {https://doi.org/10.1103/PhysRevX.5.041011} {\bibfield  {journal} {\bibinfo  {journal} {Phys. Rev. X}\ }\textbf {\bibinfo {volume} {5}},\ \bibinfo {pages} {041011} (\bibinfo {year} {2015})}\BibitemShut {NoStop}%
\bibitem [{\citenamefont {Jennings}\ and\ \citenamefont {Rudolph}(2010)}]{Jennings2010PRE}%
  \BibitemOpen
  \bibfield  {author} {\bibinfo {author} {\bibfnamefont {D.}~\bibnamefont {Jennings}}\ and\ \bibinfo {author} {\bibfnamefont {T.}~\bibnamefont {Rudolph}},\ }\bibfield  {title} {\bibinfo {title} {Entanglement and the thermodynamic arrow of time},\ }\href {https://doi.org/10.1103/PhysRevE.81.061130} {\bibfield  {journal} {\bibinfo  {journal} {Phys. Rev. E}\ }\textbf {\bibinfo {volume} {81}},\ \bibinfo {pages} {061130} (\bibinfo {year} {2010})}\BibitemShut {NoStop}%
\bibitem [{\citenamefont {Rio}\ \emph {et~al.}(2011)\citenamefont {Rio}, \citenamefont {{\AA}berg}, \citenamefont {Renner}, \citenamefont {Dahlsten},\ and\ \citenamefont {Vedral}}]{Rio2011}%
  \BibitemOpen
  \bibfield  {author} {\bibinfo {author} {\bibfnamefont {L.~d.}\ \bibnamefont {Rio}}, \bibinfo {author} {\bibfnamefont {J.}~\bibnamefont {{\AA}berg}}, \bibinfo {author} {\bibfnamefont {R.}~\bibnamefont {Renner}}, \bibinfo {author} {\bibfnamefont {O.}~\bibnamefont {Dahlsten}},\ and\ \bibinfo {author} {\bibfnamefont {V.}~\bibnamefont {Vedral}},\ }\bibfield  {title} {\bibinfo {title} {The thermodynamic meaning of negative entropy},\ }\href {https://doi.org/10.1038/nature10123} {\bibfield  {journal} {\bibinfo  {journal} {Nature}\ }\textbf {\bibinfo {volume} {474}},\ \bibinfo {pages} {61} (\bibinfo {year} {2011})}\BibitemShut {NoStop}%
\bibitem [{\citenamefont {Skrzypczyk}\ \emph {et~al.}(2014)\citenamefont {Skrzypczyk}, \citenamefont {Short},\ and\ \citenamefont {Popescu}}]{Skrzypczyk2014NC}%
  \BibitemOpen
  \bibfield  {author} {\bibinfo {author} {\bibfnamefont {P.}~\bibnamefont {Skrzypczyk}}, \bibinfo {author} {\bibfnamefont {A.~J.}\ \bibnamefont {Short}},\ and\ \bibinfo {author} {\bibfnamefont {S.}~\bibnamefont {Popescu}},\ }\bibfield  {title} {\bibinfo {title} {Work extraction and thermodynamics for individual quantum systems},\ }\href {https://doi.org/10.1038/ncomms5185} {\bibfield  {journal} {\bibinfo  {journal} {Nat. Commun.}\ }\textbf {\bibinfo {volume} {5}},\ \bibinfo {pages} {4185} (\bibinfo {year} {2014})}\BibitemShut {NoStop}%
\bibitem [{\citenamefont {Shiraishi}\ and\ \citenamefont {Takagi}(2024)}]{Shiraishi2023}%
  \BibitemOpen
  \bibfield  {author} {\bibinfo {author} {\bibfnamefont {N.}~\bibnamefont {Shiraishi}}\ and\ \bibinfo {author} {\bibfnamefont {R.}~\bibnamefont {Takagi}},\ }\bibfield  {title} {\bibinfo {title} {Arbitrary amplification of quantum coherence in asymptotic and catalytic transformation},\ }\href {https://doi.org/10.1103/PhysRevLett.132.180202} {\bibfield  {journal} {\bibinfo  {journal} {Phys. Rev. Lett.}\ }\textbf {\bibinfo {volume} {132}},\ \bibinfo {pages} {180202} (\bibinfo {year} {2024})}\BibitemShut {NoStop}%
\bibitem [{\citenamefont {Korzekwa}\ \emph {et~al.}(2016)\citenamefont {Korzekwa}, \citenamefont {Lostaglio}, \citenamefont {Oppenheim},\ and\ \citenamefont {Jennings}}]{Korzekwa2016NJP}%
  \BibitemOpen
  \bibfield  {author} {\bibinfo {author} {\bibfnamefont {K.}~\bibnamefont {Korzekwa}}, \bibinfo {author} {\bibfnamefont {M.}~\bibnamefont {Lostaglio}}, \bibinfo {author} {\bibfnamefont {J.}~\bibnamefont {Oppenheim}},\ and\ \bibinfo {author} {\bibfnamefont {D.}~\bibnamefont {Jennings}},\ }\bibfield  {title} {\bibinfo {title} {The extraction of work from quantum coherence},\ }\href {https://doi.org/10.1088/1367-2630/18/2/023045} {\bibfield  {journal} {\bibinfo  {journal} {New J. Phys.}\ }\textbf {\bibinfo {volume} {18}},\ \bibinfo {pages} {023045} (\bibinfo {year} {2016})}\BibitemShut {NoStop}%
\bibitem [{\citenamefont {de~Oliveira~Junior}\ \emph {et~al.}(2025)\citenamefont {de~Oliveira~Junior}, \citenamefont {Brask},\ and\ \citenamefont {Lipka-Bartosik}}]{deOliveiraJunior2025PRL}%
  \BibitemOpen
  \bibfield  {author} {\bibinfo {author} {\bibfnamefont {A.}~\bibnamefont {de~Oliveira~Junior}}, \bibinfo {author} {\bibfnamefont {J.~B.}\ \bibnamefont {Brask}},\ and\ \bibinfo {author} {\bibfnamefont {P.}~\bibnamefont {Lipka-Bartosik}},\ }\bibfield  {title} {\bibinfo {title} {Heat as a witness of quantum properties},\ }\href {https://doi.org/10.1103/PhysRevLett.134.050401} {\bibfield  {journal} {\bibinfo  {journal} {Phys. Rev. Lett.}\ }\textbf {\bibinfo {volume} {134}},\ \bibinfo {pages} {050401} (\bibinfo {year} {2025})}\BibitemShut {NoStop}%
\bibitem [{\citenamefont {Einstein}\ \emph {et~al.}(1935)\citenamefont {Einstein}, \citenamefont {Podolsky},\ and\ \citenamefont {Rosen}}]{EPR}%
  \BibitemOpen
  \bibfield  {author} {\bibinfo {author} {\bibfnamefont {A.}~\bibnamefont {Einstein}}, \bibinfo {author} {\bibfnamefont {B.}~\bibnamefont {Podolsky}},\ and\ \bibinfo {author} {\bibfnamefont {N.}~\bibnamefont {Rosen}},\ }\bibfield  {title} {\bibinfo {title} {Can quantum-mechanical description of physical reality be considered complete?},\ }\href {https://doi.org/10.1103/PhysRev.47.777} {\bibfield  {journal} {\bibinfo  {journal} {Phys. Rev.}\ }\textbf {\bibinfo {volume} {47}},\ \bibinfo {pages} {777} (\bibinfo {year} {1935})}\BibitemShut {NoStop}%
\bibitem [{\citenamefont {Schr\"odinger}(1935)}]{Schrodinger1935}%
  \BibitemOpen
  \bibfield  {author} {\bibinfo {author} {\bibfnamefont {E.}~\bibnamefont {Schr\"odinger}},\ }\bibfield  {title} {\bibinfo {title} {Discussion of probability relations between separated systems},\ }\href {https://doi.org/10.1017/S0305004100013554} {\bibfield  {journal} {\bibinfo  {journal} {Proc. Cambridge Philos. Soc.}\ }\textbf {\bibinfo {volume} {31}},\ \bibinfo {pages} {555} (\bibinfo {year} {1935})}\BibitemShut {NoStop}%
\bibitem [{\citenamefont {Schr\"odinger}(1936)}]{Schrodinger1936}%
  \BibitemOpen
  \bibfield  {author} {\bibinfo {author} {\bibfnamefont {E.}~\bibnamefont {Schr\"odinger}},\ }\bibfield  {title} {\bibinfo {title} {Probability relations between separated systems},\ }\href {https://doi.org/10.1017/S0305004100019137} {\bibfield  {journal} {\bibinfo  {journal} {Proc. Cambridge Philos. Soc.}\ }\textbf {\bibinfo {volume} {32}},\ \bibinfo {pages} {446} (\bibinfo {year} {1936})}\BibitemShut {NoStop}%
\bibitem [{\citenamefont {Wiseman}\ \emph {et~al.}(2007)\citenamefont {Wiseman}, \citenamefont {Jones},\ and\ \citenamefont {Doherty}}]{Wiseman2007PRL}%
  \BibitemOpen
  \bibfield  {author} {\bibinfo {author} {\bibfnamefont {H.~M.}\ \bibnamefont {Wiseman}}, \bibinfo {author} {\bibfnamefont {S.~J.}\ \bibnamefont {Jones}},\ and\ \bibinfo {author} {\bibfnamefont {A.~C.}\ \bibnamefont {Doherty}},\ }\bibfield  {title} {\bibinfo {title} {Steering, entanglement, nonlocality, and the {E}instein-{P}odolsky-{R}osen paradox},\ }\href {https://doi.org/10.1103/PhysRevLett.98.140402} {\bibfield  {journal} {\bibinfo  {journal} {Phys. Rev. Lett.}\ }\textbf {\bibinfo {volume} {98}},\ \bibinfo {pages} {140402} (\bibinfo {year} {2007})}\BibitemShut {NoStop}%
\bibitem [{\citenamefont {Jones}\ \emph {et~al.}(2007)\citenamefont {Jones}, \citenamefont {Wiseman},\ and\ \citenamefont {Doherty}}]{Jones2007PRA}%
  \BibitemOpen
  \bibfield  {author} {\bibinfo {author} {\bibfnamefont {S.~J.}\ \bibnamefont {Jones}}, \bibinfo {author} {\bibfnamefont {H.~M.}\ \bibnamefont {Wiseman}},\ and\ \bibinfo {author} {\bibfnamefont {A.~C.}\ \bibnamefont {Doherty}},\ }\bibfield  {title} {\bibinfo {title} {Entanglement, einstein-podolsky-rosen correlations, bell nonlocality, and steering},\ }\href {https://doi.org/10.1103/PhysRevA.76.052116} {\bibfield  {journal} {\bibinfo  {journal} {Phys. Rev. A}\ }\textbf {\bibinfo {volume} {76}},\ \bibinfo {pages} {052116} (\bibinfo {year} {2007})}\BibitemShut {NoStop}%
\bibitem [{\citenamefont {Nielsen}\ and\ \citenamefont {Chuang}(2010)}]{QIC-book}%
  \BibitemOpen
  \bibfield  {author} {\bibinfo {author} {\bibfnamefont {M.~A.}\ \bibnamefont {Nielsen}}\ and\ \bibinfo {author} {\bibfnamefont {I.~L.}\ \bibnamefont {Chuang}},\ }\href@noop {} {\emph {\bibinfo {title} {Quantum Computation and Quantum Information: 10th Anniversary Edition}}}\ (\bibinfo  {publisher} {Cambridge University Press},\ \bibinfo {year} {2010})\BibitemShut {NoStop}%
\bibitem [{\citenamefont {Tabia}\ and\ \citenamefont {Hsieh}(2024)}]{Tabia2024}%
  \BibitemOpen
  \bibfield  {author} {\bibinfo {author} {\bibfnamefont {G.~N.~M.}\ \bibnamefont {Tabia}}\ and\ \bibinfo {author} {\bibfnamefont {C.-Y.}\ \bibnamefont {Hsieh}},\ }\href@noop {} {\bibinfo {title} {Super-activating quantum memory by entanglement-breaking channels}} (\bibinfo {year} {2024}),\ \Eprint {https://arxiv.org/abs/2410.13499} {arXiv:2410.13499 [quant-ph]} \BibitemShut {NoStop}%
\bibitem [{\citenamefont {Heinosaari}\ and\ \citenamefont {Pellonp\"a\"a}(2014)}]{Haapasalo2014}%
  \BibitemOpen
  \bibfield  {author} {\bibinfo {author} {\bibfnamefont {E.~H.~T.}\ \bibnamefont {Heinosaari}}\ and\ \bibinfo {author} {\bibfnamefont {J.-P.}\ \bibnamefont {Pellonp\"a\"a}},\ }\bibfield  {title} {\bibinfo {title} {When do pieces determine the whole? extreme marginals of a completely positive map},\ }\href {https://doi.org/10.1142/S0129055X14500020} {\bibfield  {journal} {\bibinfo  {journal} {Rev. Math. Phys.}\ }\textbf {\bibinfo {volume} {26}},\ \bibinfo {pages} {1450002} (\bibinfo {year} {2014})}\BibitemShut {NoStop}%
\bibitem [{\citenamefont {Girard}\ \emph {et~al.}(2021)\citenamefont {Girard}, \citenamefont {Plávala},\ and\ \citenamefont {Sikora}}]{Girard2021NC}%
  \BibitemOpen
  \bibfield  {author} {\bibinfo {author} {\bibfnamefont {M.}~\bibnamefont {Girard}}, \bibinfo {author} {\bibfnamefont {M.}~\bibnamefont {Plávala}},\ and\ \bibinfo {author} {\bibfnamefont {J.}~\bibnamefont {Sikora}},\ }\bibfield  {title} {\bibinfo {title} {Jordan products of quantum channels and their compatibility},\ }\href {https://doi.org/10.1038/s41467-021-22275-0} {\bibfield  {journal} {\bibinfo  {journal} {Nat. Commun.}\ }\textbf {\bibinfo {volume} {12}},\ \bibinfo {pages} {2129} (\bibinfo {year} {2021})}\BibitemShut {NoStop}%
\bibitem [{\citenamefont {Johnson}\ and\ \citenamefont {Viola}(2015)}]{Viola2015}%
  \BibitemOpen
  \bibfield  {author} {\bibinfo {author} {\bibfnamefont {P.~D.}\ \bibnamefont {Johnson}}\ and\ \bibinfo {author} {\bibfnamefont {L.}~\bibnamefont {Viola}},\ }\bibfield  {title} {\bibinfo {title} {On state versus channel quantum extension problems: exact results for ${U}\otimes {U}\otimes {U}$ symmetry},\ }\href {https://doi.org/10.1088/1751-8113/48/3/035307} {\bibfield  {journal} {\bibinfo  {journal} {J. Phys. A: Math. Theor.}\ }\textbf {\bibinfo {volume} {48}},\ \bibinfo {pages} {035307} (\bibinfo {year} {2015})}\BibitemShut {NoStop}%
\bibitem [{\citenamefont {Haapasalo}\ \emph {et~al.}(2021)\citenamefont {Haapasalo}, \citenamefont {Kraft}, \citenamefont {Miklin},\ and\ \citenamefont {Uola}}]{Haapasalo2021Quantum}%
  \BibitemOpen
  \bibfield  {author} {\bibinfo {author} {\bibfnamefont {E.}~\bibnamefont {Haapasalo}}, \bibinfo {author} {\bibfnamefont {T.}~\bibnamefont {Kraft}}, \bibinfo {author} {\bibfnamefont {N.}~\bibnamefont {Miklin}},\ and\ \bibinfo {author} {\bibfnamefont {R.}~\bibnamefont {Uola}},\ }\bibfield  {title} {\bibinfo {title} {Quantum marginal problem and incompatibility},\ }\href {https://doi.org/10.22331/q-2021-06-15-476} {\bibfield  {journal} {\bibinfo  {journal} {{Quantum}}\ }\textbf {\bibinfo {volume} {5}},\ \bibinfo {pages} {476} (\bibinfo {year} {2021})}\BibitemShut {NoStop}%
\bibitem [{\citenamefont {Haapasalo}(2015)}]{Haapasalo2015}%
  \BibitemOpen
  \bibfield  {author} {\bibinfo {author} {\bibfnamefont {E.}~\bibnamefont {Haapasalo}},\ }\bibfield  {title} {\bibinfo {title} {Robustness of incompatibility for quantum devices},\ }\href {https://doi.org/10.1088/1751-8113/48/25/255303} {\bibfield  {journal} {\bibinfo  {journal} {J. Phys. A: Math. Theor.}\ }\textbf {\bibinfo {volume} {48}},\ \bibinfo {pages} {255303} (\bibinfo {year} {2015})}\BibitemShut {NoStop}%
\bibitem [{\citenamefont {Carmeli}\ \emph {et~al.}(2019{\natexlab{a}})\citenamefont {Carmeli}, \citenamefont {Heinosaari},\ and\ \citenamefont {Toigo}}]{Carmeli2019PRL}%
  \BibitemOpen
  \bibfield  {author} {\bibinfo {author} {\bibfnamefont {C.}~\bibnamefont {Carmeli}}, \bibinfo {author} {\bibfnamefont {T.}~\bibnamefont {Heinosaari}},\ and\ \bibinfo {author} {\bibfnamefont {A.}~\bibnamefont {Toigo}},\ }\bibfield  {title} {\bibinfo {title} {Quantum incompatibility witnesses},\ }\href {https://doi.org/10.1103/PhysRevLett.122.130402} {\bibfield  {journal} {\bibinfo  {journal} {Phys. Rev. Lett.}\ }\textbf {\bibinfo {volume} {122}},\ \bibinfo {pages} {130402} (\bibinfo {year} {2019}{\natexlab{a}})}\BibitemShut {NoStop}%
\bibitem [{\citenamefont {Carmeli}\ \emph {et~al.}(2019{\natexlab{b}})\citenamefont {Carmeli}, \citenamefont {Heinosaari}, \citenamefont {Miyadera},\ and\ \citenamefont {Toigo}}]{Carmeli2019JMP}%
  \BibitemOpen
  \bibfield  {author} {\bibinfo {author} {\bibfnamefont {C.}~\bibnamefont {Carmeli}}, \bibinfo {author} {\bibfnamefont {T.}~\bibnamefont {Heinosaari}}, \bibinfo {author} {\bibfnamefont {T.}~\bibnamefont {Miyadera}},\ and\ \bibinfo {author} {\bibfnamefont {A.}~\bibnamefont {Toigo}},\ }\bibfield  {title} {\bibinfo {title} {Witnessing incompatibility of quantum channels},\ }\href {https://doi.org/10.1063/1.5126496} {\bibfield  {journal} {\bibinfo  {journal} {J. Math. Phys.}\ }\textbf {\bibinfo {volume} {60}},\ \bibinfo {pages} {122202} (\bibinfo {year} {2019}{\natexlab{b}})}\BibitemShut {NoStop}%
\bibitem [{\citenamefont {Tabia}\ \emph {et~al.}(2022)\citenamefont {Tabia}, \citenamefont {Chen}, \citenamefont {Hsieh}, \citenamefont {Yin},\ and\ \citenamefont {Liang}}]{Tabia2022npjQI}%
  \BibitemOpen
  \bibfield  {author} {\bibinfo {author} {\bibfnamefont {G.~N.~M.}\ \bibnamefont {Tabia}}, \bibinfo {author} {\bibfnamefont {K.-S.}\ \bibnamefont {Chen}}, \bibinfo {author} {\bibfnamefont {C.-Y.}\ \bibnamefont {Hsieh}}, \bibinfo {author} {\bibfnamefont {Y.-C.}\ \bibnamefont {Yin}},\ and\ \bibinfo {author} {\bibfnamefont {Y.-C.}\ \bibnamefont {Liang}},\ }\bibfield  {title} {\bibinfo {title} {Entanglement transitivity problems},\ }\href {https://doi.org/10.1038/s41534-022-00616-1} {\bibfield  {journal} {\bibinfo  {journal} {npj Quantum Inf.}\ }\textbf {\bibinfo {volume} {8}},\ \bibinfo {pages} {98} (\bibinfo {year} {2022})}\BibitemShut {NoStop}%
\bibitem [{\citenamefont {Chen}\ \emph {et~al.}(2025)\citenamefont {Chen}, \citenamefont {Tabia}, \citenamefont {Hsieh}, \citenamefont {Yin},\ and\ \citenamefont {Liang}}]{Chen2025}%
  \BibitemOpen
  \bibfield  {author} {\bibinfo {author} {\bibfnamefont {K.-S.}\ \bibnamefont {Chen}}, \bibinfo {author} {\bibfnamefont {G.~N.~M.}\ \bibnamefont {Tabia}}, \bibinfo {author} {\bibfnamefont {C.-Y.}\ \bibnamefont {Hsieh}}, \bibinfo {author} {\bibfnamefont {Y.-C.}\ \bibnamefont {Yin}},\ and\ \bibinfo {author} {\bibfnamefont {Y.-C.}\ \bibnamefont {Liang}},\ }\href@noop {} {\bibinfo {title} {Nonlocality of quantum states can be transitive}} (\bibinfo {year} {2025}),\ \Eprint {https://arxiv.org/abs/2412.10505} {arXiv:2412.10505 [quant-ph]} \BibitemShut {NoStop}%
\bibitem [{\citenamefont {Liu}\ \emph {et~al.}(2025)\citenamefont {Liu}, \citenamefont {Chen}, \citenamefont {Hsieh}, \citenamefont {Tabia},\ and\ \citenamefont {Liang}}]{Liu2025}%
  \BibitemOpen
  \bibfield  {author} {\bibinfo {author} {\bibfnamefont {M.-E.}\ \bibnamefont {Liu}}, \bibinfo {author} {\bibfnamefont {K.-S.}\ \bibnamefont {Chen}}, \bibinfo {author} {\bibfnamefont {C.-Y.}\ \bibnamefont {Hsieh}}, \bibinfo {author} {\bibfnamefont {G.~N.~M.}\ \bibnamefont {Tabia}},\ and\ \bibinfo {author} {\bibfnamefont {Y.-C.}\ \bibnamefont {Liang}},\ }\href@noop {} {\bibinfo {title} {Large parts are generically entangled across all cuts}} (\bibinfo {year} {2025}),\ \Eprint {https://arxiv.org/abs/2505.20420} {arXiv:2505.20420 [quant-ph]} \BibitemShut {NoStop}%
\bibitem [{\citenamefont {Watrous}(2018)}]{Watrous-book}%
  \BibitemOpen
  \bibfield  {author} {\bibinfo {author} {\bibfnamefont {J.}~\bibnamefont {Watrous}},\ }\href {https://doi.org/10.1017/9781316848142} {\emph {\bibinfo {title} {The Theory of Quantum Information}}}\ (\bibinfo  {publisher} {Cambridge University Press},\ \bibinfo {year} {2018})\BibitemShut {NoStop}%
\bibitem [{\citenamefont {Umegaki}(1962)}]{Umegaki1962}%
  \BibitemOpen
  \bibfield  {author} {\bibinfo {author} {\bibfnamefont {H.}~\bibnamefont {Umegaki}},\ }\bibfield  {title} {\bibinfo {title} {Conditional expectation in an operator algebra, {IV} ({E}ntropy {A}nd {I}nformation)},\ }\href {https://www.jstage.jst.go.jp/article/kodaimath1949/14/2/14_2_59/_article/-char/ja/} {\bibfield  {journal} {\bibinfo  {journal} {Koudai Math. Semi. Rep.}\ }\textbf {\bibinfo {volume} {14}},\ \bibinfo {pages} {59} (\bibinfo {year} {1962})}\BibitemShut {NoStop}%
\bibitem [{\citenamefont {Oppenheim}\ \emph {et~al.}(2002)\citenamefont {Oppenheim}, \citenamefont {Horodecki}, \citenamefont {Horodecki},\ and\ \citenamefont {Horodecki}}]{Oppenheim2002PRL}%
  \BibitemOpen
  \bibfield  {author} {\bibinfo {author} {\bibfnamefont {J.}~\bibnamefont {Oppenheim}}, \bibinfo {author} {\bibfnamefont {M.}~\bibnamefont {Horodecki}}, \bibinfo {author} {\bibfnamefont {P.}~\bibnamefont {Horodecki}},\ and\ \bibinfo {author} {\bibfnamefont {R.}~\bibnamefont {Horodecki}},\ }\bibfield  {title} {\bibinfo {title} {Thermodynamical approach to quantifying quantum correlations},\ }\href {https://doi.org/10.1103/PhysRevLett.89.180402} {\bibfield  {journal} {\bibinfo  {journal} {Phys. Rev. Lett.}\ }\textbf {\bibinfo {volume} {89}},\ \bibinfo {pages} {180402} (\bibinfo {year} {2002})}\BibitemShut {NoStop}%
\bibitem [{\citenamefont {Pusey}(2013)}]{PhysRevA.88.032313}%
  \BibitemOpen
  \bibfield  {author} {\bibinfo {author} {\bibfnamefont {M.~F.}\ \bibnamefont {Pusey}},\ }\bibfield  {title} {\bibinfo {title} {Negativity and steering: A stronger peres conjecture},\ }\href {https://doi.org/10.1103/PhysRevA.88.032313} {\bibfield  {journal} {\bibinfo  {journal} {Phys. Rev. A}\ }\textbf {\bibinfo {volume} {88}},\ \bibinfo {pages} {032313} (\bibinfo {year} {2013})}\BibitemShut {NoStop}%
\bibitem [{\citenamefont {Piani}\ and\ \citenamefont {Watrous}(2015{\natexlab{a}})}]{PhysRevLett.114.060404}%
  \BibitemOpen
  \bibfield  {author} {\bibinfo {author} {\bibfnamefont {M.}~\bibnamefont {Piani}}\ and\ \bibinfo {author} {\bibfnamefont {J.}~\bibnamefont {Watrous}},\ }\bibfield  {title} {\bibinfo {title} {Necessary and sufficient quantum information characterization of einstein-podolsky-rosen steering},\ }\href {https://doi.org/10.1103/PhysRevLett.114.060404} {\bibfield  {journal} {\bibinfo  {journal} {Phys. Rev. Lett.}\ }\textbf {\bibinfo {volume} {114}},\ \bibinfo {pages} {060404} (\bibinfo {year} {2015}{\natexlab{a}})}\BibitemShut {NoStop}%
\bibitem [{\citenamefont {Gallego}\ and\ \citenamefont {Aolita}(2015)}]{PhysRevX.5.041008}%
  \BibitemOpen
  \bibfield  {author} {\bibinfo {author} {\bibfnamefont {R.}~\bibnamefont {Gallego}}\ and\ \bibinfo {author} {\bibfnamefont {L.}~\bibnamefont {Aolita}},\ }\bibfield  {title} {\bibinfo {title} {Resource theory of steering},\ }\href {https://doi.org/10.1103/PhysRevX.5.041008} {\bibfield  {journal} {\bibinfo  {journal} {Phys. Rev. X}\ }\textbf {\bibinfo {volume} {5}},\ \bibinfo {pages} {041008} (\bibinfo {year} {2015})}\BibitemShut {NoStop}%
\bibitem [{\citenamefont {Reid}(1989)}]{PhysRevA.40.913}%
  \BibitemOpen
  \bibfield  {author} {\bibinfo {author} {\bibfnamefont {M.~D.}\ \bibnamefont {Reid}},\ }\bibfield  {title} {\bibinfo {title} {Demonstration of the einstein-podolsky-rosen paradox using nondegenerate parametric amplification},\ }\href {https://doi.org/10.1103/PhysRevA.40.913} {\bibfield  {journal} {\bibinfo  {journal} {Phys. Rev. A}\ }\textbf {\bibinfo {volume} {40}},\ \bibinfo {pages} {913} (\bibinfo {year} {1989})}\BibitemShut {NoStop}%
\bibitem [{\citenamefont {Yadin}\ \emph {et~al.}(2021)\citenamefont {Yadin}, \citenamefont {Fadel},\ and\ \citenamefont {Gessner}}]{Yadin2021NC}%
  \BibitemOpen
  \bibfield  {author} {\bibinfo {author} {\bibfnamefont {B.}~\bibnamefont {Yadin}}, \bibinfo {author} {\bibfnamefont {M.}~\bibnamefont {Fadel}},\ and\ \bibinfo {author} {\bibfnamefont {M.}~\bibnamefont {Gessner}},\ }\bibfield  {title} {\bibinfo {title} {Metrological complementarity reveals the {E}instein-{P}odolsky-{R}osen paradox},\ }\href {https://doi.org/10.1038/s41467-021-22353-3} {\bibfield  {journal} {\bibinfo  {journal} {Nat. Commun.}\ }\textbf {\bibinfo {volume} {12}},\ \bibinfo {pages} {2410} (\bibinfo {year} {2021})}\BibitemShut {NoStop}%
\bibitem [{\citenamefont {Piani}\ and\ \citenamefont {Watrous}(2015{\natexlab{b}})}]{Piani2015}%
  \BibitemOpen
  \bibfield  {author} {\bibinfo {author} {\bibfnamefont {M.}~\bibnamefont {Piani}}\ and\ \bibinfo {author} {\bibfnamefont {J.}~\bibnamefont {Watrous}},\ }\bibfield  {title} {\bibinfo {title} {Necessary and sufficient quantum information characterization of {E}instein-{P}odolsky-{R}osen steering},\ }\href {https://doi.org/10.1103/PhysRevLett.114.060404} {\bibfield  {journal} {\bibinfo  {journal} {Phys. Rev. Lett.}\ }\textbf {\bibinfo {volume} {114}},\ \bibinfo {pages} {060404} (\bibinfo {year} {2015}{\natexlab{b}})}\BibitemShut {NoStop}%
\bibitem [{\citenamefont {Hsieh}\ \emph {et~al.}(2016)\citenamefont {Hsieh}, \citenamefont {Liang},\ and\ \citenamefont {Lee}}]{Hsieh2016}%
  \BibitemOpen
  \bibfield  {author} {\bibinfo {author} {\bibfnamefont {C.-Y.}\ \bibnamefont {Hsieh}}, \bibinfo {author} {\bibfnamefont {Y.-C.}\ \bibnamefont {Liang}},\ and\ \bibinfo {author} {\bibfnamefont {R.-K.}\ \bibnamefont {Lee}},\ }\bibfield  {title} {\bibinfo {title} {Quantum steerability: Characterization, quantification, superactivation, and unbounded amplification},\ }\href {https://doi.org/10.1103/PhysRevA.94.062120} {\bibfield  {journal} {\bibinfo  {journal} {Phys. Rev. A}\ }\textbf {\bibinfo {volume} {94}},\ \bibinfo {pages} {062120} (\bibinfo {year} {2016})}\BibitemShut {NoStop}%
\bibitem [{\citenamefont {Quintino}\ \emph {et~al.}(2016)\citenamefont {Quintino}, \citenamefont {Brunner},\ and\ \citenamefont {Huber}}]{QuintinoPRA2016}%
  \BibitemOpen
  \bibfield  {author} {\bibinfo {author} {\bibfnamefont {M.~T.}\ \bibnamefont {Quintino}}, \bibinfo {author} {\bibfnamefont {N.}~\bibnamefont {Brunner}},\ and\ \bibinfo {author} {\bibfnamefont {M.}~\bibnamefont {Huber}},\ }\bibfield  {title} {\bibinfo {title} {Superactivation of quantum steering},\ }\href {https://doi.org/10.1103/PhysRevA.94.062123} {\bibfield  {journal} {\bibinfo  {journal} {Phys. Rev. A}\ }\textbf {\bibinfo {volume} {94}},\ \bibinfo {pages} {062123} (\bibinfo {year} {2016})}\BibitemShut {NoStop}%
\bibitem [{\citenamefont {Uola}\ \emph {et~al.}(2019)\citenamefont {Uola}, \citenamefont {Kraft}, \citenamefont {Shang}, \citenamefont {Yu},\ and\ \citenamefont {G\"uhne}}]{Uola2019PRL}%
  \BibitemOpen
  \bibfield  {author} {\bibinfo {author} {\bibfnamefont {R.}~\bibnamefont {Uola}}, \bibinfo {author} {\bibfnamefont {T.}~\bibnamefont {Kraft}}, \bibinfo {author} {\bibfnamefont {J.}~\bibnamefont {Shang}}, \bibinfo {author} {\bibfnamefont {X.-D.}\ \bibnamefont {Yu}},\ and\ \bibinfo {author} {\bibfnamefont {O.}~\bibnamefont {G\"uhne}},\ }\bibfield  {title} {\bibinfo {title} {Quantifying quantum resources with conic programming},\ }\href {https://doi.org/10.1103/PhysRevLett.122.130404} {\bibfield  {journal} {\bibinfo  {journal} {Phys. Rev. Lett.}\ }\textbf {\bibinfo {volume} {122}},\ \bibinfo {pages} {130404} (\bibinfo {year} {2019})}\BibitemShut {NoStop}%
\bibitem [{\citenamefont {Mori}(2020)}]{Mori2020PRA}%
  \BibitemOpen
  \bibfield  {author} {\bibinfo {author} {\bibfnamefont {J.}~\bibnamefont {Mori}},\ }\bibfield  {title} {\bibinfo {title} {Operational characterization of incompatibility of quantum channels with quantum state discrimination},\ }\href {https://doi.org/10.1103/PhysRevA.101.032331} {\bibfield  {journal} {\bibinfo  {journal} {Phys. Rev. A}\ }\textbf {\bibinfo {volume} {101}},\ \bibinfo {pages} {032331} (\bibinfo {year} {2020})}\BibitemShut {NoStop}%
\bibitem [{\citenamefont {Hsieh}\ and\ \citenamefont {Gessner}(tion)}]{Hsieh-preparation}%
  \BibitemOpen
  \bibfield  {author} {\bibinfo {author} {\bibfnamefont {C.-Y.}\ \bibnamefont {Hsieh}}\ and\ \bibinfo {author} {\bibfnamefont {M.}~\bibnamefont {Gessner}},\ }\href@noop {} {\bibinfo {title} {Uncertainty principle's advantages in work extraction}} (\bibinfo {year} {in preparation})\BibitemShut {NoStop}%
\bibitem [{\citenamefont {Regula}(2022{\natexlab{a}})}]{Regula2022PRL}%
  \BibitemOpen
  \bibfield  {author} {\bibinfo {author} {\bibfnamefont {B.}~\bibnamefont {Regula}},\ }\bibfield  {title} {\bibinfo {title} {Probabilistic transformations of quantum resources},\ }\href {https://doi.org/10.1103/PhysRevLett.128.110505} {\bibfield  {journal} {\bibinfo  {journal} {Phys. Rev. Lett.}\ }\textbf {\bibinfo {volume} {128}},\ \bibinfo {pages} {110505} (\bibinfo {year} {2022}{\natexlab{a}})}\BibitemShut {NoStop}%
\bibitem [{\citenamefont {Regula}(2022{\natexlab{b}})}]{Regula2022Quantum}%
  \BibitemOpen
  \bibfield  {author} {\bibinfo {author} {\bibfnamefont {B.}~\bibnamefont {Regula}},\ }\bibfield  {title} {\bibinfo {title} {Tight constraints on probabilistic convertibility of quantum states},\ }\href {https://doi.org/10.22331/q-2022-09-22-817} {\bibfield  {journal} {\bibinfo  {journal} {{Quantum}}\ }\textbf {\bibinfo {volume} {6}},\ \bibinfo {pages} {817} (\bibinfo {year} {2022}{\natexlab{b}})}\BibitemShut {NoStop}%
\bibitem [{\citenamefont {Gour}\ \emph {et~al.}(2015)\citenamefont {Gour}, \citenamefont {Müller}, \citenamefont {Narasimhachar}, \citenamefont {Spekkens},\ and\ \citenamefont {{Yunger Halpern}}}]{purity-review}%
  \BibitemOpen
  \bibfield  {author} {\bibinfo {author} {\bibfnamefont {G.}~\bibnamefont {Gour}}, \bibinfo {author} {\bibfnamefont {M.~P.}\ \bibnamefont {Müller}}, \bibinfo {author} {\bibfnamefont {V.}~\bibnamefont {Narasimhachar}}, \bibinfo {author} {\bibfnamefont {R.~W.}\ \bibnamefont {Spekkens}},\ and\ \bibinfo {author} {\bibfnamefont {N.}~\bibnamefont {{Yunger Halpern}}},\ }\bibfield  {title} {\bibinfo {title} {The resource theory of informational nonequilibrium in thermodynamics},\ }\href {https://doi.org/https://doi.org/10.1016/j.physrep.2015.04.003} {\bibfield  {journal} {\bibinfo  {journal} {Phys. Rep.}\ }\textbf {\bibinfo {volume} {583}},\ \bibinfo {pages} {1} (\bibinfo {year} {2015})}\BibitemShut {NoStop}%
\bibitem [{\citenamefont {Hsieh}\ \emph {et~al.}(2025{\natexlab{b}})\citenamefont {Hsieh}, \citenamefont {Stratton}, \citenamefont {Weng},\ and\ \citenamefont {Scarani}}]{Hsieh2025PRA-inf}%
  \BibitemOpen
  \bibfield  {author} {\bibinfo {author} {\bibfnamefont {C.-Y.}\ \bibnamefont {Hsieh}}, \bibinfo {author} {\bibfnamefont {B.}~\bibnamefont {Stratton}}, \bibinfo {author} {\bibfnamefont {H.-C.}\ \bibnamefont {Weng}},\ and\ \bibinfo {author} {\bibfnamefont {V.}~\bibnamefont {Scarani}},\ }\bibfield  {title} {\bibinfo {title} {Informational nonequilibrium concentration},\ }\href {https://doi.org/10.1103/PhysRevA.111.052423} {\bibfield  {journal} {\bibinfo  {journal} {Phys. Rev. A}\ }\textbf {\bibinfo {volume} {111}},\ \bibinfo {pages} {052423} (\bibinfo {year} {2025}{\natexlab{b}})}\BibitemShut {NoStop}%
\end{thebibliography}%

\end{document}